\newcommand{\bse}{\begin{subequations}}
\newcommand{\ese}{\end{subequations}}
\newtheorem{theorem}{Theorem}
\newtheorem{lemma}[theorem]{Lemma}
\newcommand{\R}{\mathbb R}
\newcommand{\Z}{\mathbb Z}
\newcommand{\C}{\mathbb C}
\newtheorem{proposition}{Proposition}
\numberwithin{equation}{section}
\title[High-order soliton solutions and their dynamics in the  inhomogeneous variable coefficients Hirota equation]{High-order soliton solutions and their dynamics in the  inhomogeneous variable coefficients Hirota equation}
\author{Huijuan Zhou}
\address[HZ]{School of Mathematical Sciences, Shanghai Key Laboratory of Pure Mathematics and Mathematical Practice, and Shanghai Key Laboratory of Trustworthy Computing \\
East China Normal University \\ Shanghai 200241 \\ People's Republic of China}
\author{Yong Chen}
\address[YC]{School of Mathematical Sciences, Shanghai Key Laboratory of Pure Mathematics and Mathematical Practice, and Shanghai Key Laboratory of Trustworthy Computing \\
East China Normal University \\ Shanghai 200241 \\ People's Republic of China}
\address[YC]{College of Mathematics and Systems Science \\ Shandong University of Science and Technology \\ Qingdao 266590 \\ People's Republic of China}
\address[YC]{Department of Physics \\ Zhejiang Normal University \\ Jinhua 321004 \\ People's Republic of China}
\email{ychen@sei.ecnu.edu.cn}
\begin{document}
\par
\begin{abstract}
A series of new soliton solutions are presented for the inhomogeneous variable coefficient Hirota equation by using the Riemann Hilbert method and transformation relationship. First, through a standard dressing procedure, the N-soliton matrix associated with the simple zeros in the Riemann Hilbert problem for the Hirota equation is constructed. Then the N-soliton matrix of the inhomogeneous variable coefficient Hirota equation can be obtained by a special transformation relationship from the N-soliton matrix of the Hirota equation. Next, using the generalized Darboux transformation, the high-order soliton solutions corresponding to the elementary high-order zeros in the Riemann Hilbert problem for the Hirota equation can be derived. Similarly, employing the transformation relationship mentioned above can lead to the high-order soliton solutions of the inhomogeneous variable coefficient Hirota equation. In addition, the collision dynamics of Hirota and inhomogeneous variable coefficient Hirota equations are analyzed; the asymptotic behaviors for  multi-solitons and long-term asymptotic estimates for the high-order one-soliton of the Hirota equation are concretely calculated. Most notably, by analyzing the dynamics of the multi-solitons and high-order solitons of the inhomogeneous variable coefficient Hirota equation, we discover numerous new waveforms such as heart-shaped periodic wave solutions, O-shaped periodic wave solutions etc. that have never been reported before, which are crucial in theory and practice.
\end{abstract}

\maketitle

\section{Introduction}

As we all know, the nonlinear Schr\"{o}dinger (NLS) equation describes a plane self-focusing and one-dimensional self-modulation of waves in nonlinear dispersive media, which has various applications in a wide range of physical systems such as water waves \cite{be1967,za1968}, nonlinear optics \cite{ha19731,ha19732}, solid-state physics and plasma physics \cite{ha1972}. However, several phenomena observed in the experiment cannot be justified by the NLS equation. As the light pulse becomes shorter, they require more energy to become solitons \cite{post-1989-ap,lib-2005}. In this case, some additional effects have become significant. A modified NLS equation
\begin{equation}\label{hirota}
iu_{Z}+\alpha(u_{TT}+2|u|^{2}u)+i\beta (u_{TTT}+6|u|^{2}u_{T})=0, \\
\end{equation}
which is called Hirota equation, can be used to describe the propagation of subpicosecond or femtosecond optical pulse in fibers thanks to it taking into account higher-order dispersion and time-delay corrections to the cubic nonlinearity. In this equation, $T$ is the propagation variable and $Z$ is the retarded time variable in a moving frame while $u$ is the envelope of the wave field. The two terms in \eqref{hirota} with a real coefficient $\beta$ represent the third-order dispersion $u_{TTT}$ and a time-delay correction to the cubic term $|u|^{2}u_{T}$, respectively. Due to a fine balance between its linear dispersive and nonlinear collapsing terms, some exact solitons solutions of Hirota equation have been obtained by many author\cite{vc4,vc5,vc6,vc7,vc8,vc9,vc10,vc11,vc13,vc14}. When $\alpha=1$, $\beta=0$,  Eq.\eqref{hirota} reduces to the NLS equation. In another case of $\alpha=0$, $\beta=1$, the Hirota equation reduces to modified Korteweg-de Vries equation.

Note that these investigations of optical solitons or solitary waves have been focused mainly on homogeneous fibers. However, considering long-distance communication and manufacturing problems in the realistic fiber transmission lines, the inhomogeneous variable coefficient Hirota  (IVC-Hirota) equation \cite{kodama-1985} 
\begin{equation}\label{vcheq}
iq_{z}+\alpha_{1}(z) q_{t t}-\frac{1}{3 \delta} i \alpha_{1}(z) q_{t t t}+\delta \alpha_{4}(z) q|q|^{2}-i\alpha_{4}(z)|q|^{2} q_{t}-i\alpha_{6}(z)q=0,
\end{equation}
was investigated. 
Here $$
\alpha_{6}(z)=\frac{\alpha_{1, z} \alpha_{4}-\alpha_{1} \alpha_{4, z}}{2 \alpha_{1} \alpha_{4}},
$$
$\delta$ is a real number, $\alpha_{1}(z)$ and $\alpha_{4}(z)$ are dispersion and nonlinear effects respectively.  Dispersion broadens the waveform and nonlinear effects narrow it. Under certain conditions, the two effects reach a balance and maintain waveform stability.
We should point out that Eq.\eqref{vcheq} is an integrable equation. Study of  Eq.\eqref{vcheq} is of great interest due to its wide range of applications \cite{vc29,dai-2006-jpa,P-SAM-2010,tao-JNMP-2013,Rajan-2015,Gao-2017,Yang-RC-2021}. Its use is not only restricted to optical pulse propagation in inhomogeneous fiber media, but also to the core of dispersion-managed solitons and combined managed solitons. 
To our knowledge, the study of high-order soliton for the IVC-Hirota equation has not been widespread.

In this paper, an IVC-Hirota equation that is widely used in optics is studied by the Riemann-Hilbert (RH) method and transformation relationship. There is a transformation relationship, which maps IVC-Hirota to the Hirota equation. Thanks to the transformation, one can obtain many solutions of the IVC-Hirota equation from the known solutions of the Hirota equation. Specifically, as for the Hirota equation, through a standard dressing procedure, we can find the soliton matrix for the nonregular RH problem with simple zeros. Then combined with generalized Darboux transformation (gDT), soliton matrix for elementary high-order zeros in the RH problem are constructed. RH problem provides a feasible and strict method for studying the long-term asymptotic behavior of integrable equations \cite{vc86,pwq-2019,feg}. Furthermore, the influence of free parameter $(\alpha, \beta)$ in soliton solutions of general Hirota equation on soliton propagation, collision dynamics along with the asymptotic behavior for the two solitons and longtime asymptotic estimations for the high-order one soliton are concretely analyzed. The propagation direction, velocity, width and other physical quantities of solitons can be modulated by adjusting the free parameters of the general Hirota equation. In addition, using a special transformation, we can obtain explicit expressions of multi-solitons and high-order soliton for the IVC-Hirota equation by the solutions of the Hirota equation. We design abundant new types of multi-solitons and high-order solitons of the IVC-Hirota equation through analysis of the explicit expression of solutions. Such as: heart-shaped periodic wave solutions and O-shaped periodic wave solutions.
The dynamics analysis of these solutions are useful in observing and design of fiber optic in femtosecond fiber laser systems or in optical communication links with distributed dispersion and nonlinearity management.

This paper is organized as follows. In section 2, the matrix RH problem is formulated. In section 3, the N-soliton formula for Hirota and IVC-Hirota equation is derived by considering the simple zeros in the RH problem and some exact solutions are constructed. In section 4, we first give the high order N-soliton formula for the Hirota equation, which corresponds to the elementary zeros in the RH problem.  Then the high order N-soliton formula for IVC-Hirota is also constructed.
In section 5, the dynamics of high-order solitons are given in the IVC-Hirota equation. The final section is devoted to conclusion and discussion.

\section{Inverse scattering theory for the Hirota equation}

In this section, we study the scattering and inverse scattering problem for the Hirota equation \eqref{hirota}. The Hirota equation  can be constructed by the compatibility condition of the following spectral problem\cite{sa1991}:
\begin{equation}\label{b}
\begin{split}
Y_{T}=UY,\\
Y_{Z}=VY,
\end{split}
\end{equation}
where
\begin{equation}\notag
\begin{split}
U=-i\zeta\Lambda+Q, \quad
V=(-4i\beta\zeta^3-2i\alpha\zeta^{2})\Lambda+V_{1},\quad
Q=\left(\begin{matrix}
0&u\\
-u^{*}&0\end{matrix}
\right), \quad
\Lambda=\left(\begin{matrix}
1&0\\
0&-1\end{matrix}
\right),
\end{split}
\end{equation}
\begin{equation}\notag
V_{1}=\zeta^{2}\left(\begin{matrix}
0&4\beta u \\
-4\beta u^{*} &0\end{matrix}
\right)+\zeta\left(\begin{matrix}
2i\beta|u|^2&2i\beta u_{T}+2\alpha u\\
2i\beta u_{T}^{*}-2\alpha u^{*}&-2i\beta|u|^2\end{matrix}
\right)+\left(\begin{matrix}
i\alpha|u|^2+\beta(uu_{T}^{*}-u_{T}u^{*})&i\alpha u_{T}-\beta(2|u|^2u+u_{TT}) \\
i\alpha u_{T}^{*}+\beta(2|u|^2u^{*}+u_{TT}^{*}) &-i\alpha|u|^2-\beta(uu_{T}^{*}-u_{T}u^{*})\end{matrix}
\right),
\end{equation}
 $u=u(T,Z)$ is potential function, $\zeta$ is a spectral parameter, $Y(T,Z,\zeta)$ is a vector function, and the superscript $*$ represents complex conjugation. Supposing $u(T)=u(T,0)$ decays to zero sufficiently fast as $T \rightarrow \pm\infty$. For a prescribed initial condition $u(T,0)$, we seek the solution $u(T,Z)$ at any later time $Z$. That is, we solve an initial value problem for the Hirota equation.  

Notation
\begin{equation}\notag
E_{1}=e^{-i\zeta\Lambda T-(4i\beta\zeta^3+2i\alpha\zeta^{2})\Lambda Z},
\end{equation}
\begin{equation}\label{2.5}
J=YE_{1}^{-1},
\end{equation}
so that the new matrix function $J$ is $(T,Z)$-independent at infinity. Inserting \eqref{2.5} into \eqref{b}, the Lax pair \eqref{b} becomes
\begin{equation}\label{2.6}
\begin{split}
&J_{T}=-i\zeta[\Lambda,J]+QJ,\\
&J_{Z}=-(4i\beta\zeta^3+2i\alpha\zeta^{2})[\Lambda,J]+V_{1}J,
\end{split}
\end{equation}
where $[\Lambda,J]=\Lambda J-J\Lambda$ is the commutator. Notice that both matrices $Q$ and $V_{1}$ are anti-Hermitian, i.e.,
\begin{equation}\label{2.9}
Q^{\dag}=-Q, \ \ V_{1}^{\dag}=-V_{1},
\end{equation}
where the superscript $\dag$ represents the Hermitian of a matrix. In addition, their traces are both equal to zero, i.e., $trQ = trV_{1} = 0$.

Now the time $Z$ in the above notations to be considered as dummy variable. For the scattering problem, introduce matrix Jost solutions $J_{\pm}(T,\zeta)$ of \eqref{2.6} with the following asymptotic at large distances:
\begin{equation}\label{2.10}
J_{\pm}(T,\zeta)\rightarrow I, \quad T\rightarrow\pm\infty,
\end{equation}
where $I$ is a $2\times2$ unit matrix. Next, the analytical properties of Jost solutions $J_{\pm}(T,\zeta)$ will be delineated.  First, the notation $E(T,\zeta)=e^{-i\zeta \Lambda T}$, $\Phi\equiv J_{-}E$ and $\Psi \equiv J_{+}E$ are introduced. Notice that $Y_{\pm}(T,\zeta)$ satisfies the scattering equation \eqref{b}, i.e.,
\begin{equation}\label{2.19}
Y_{T}+i\zeta\Lambda Y=QY.
\end{equation}
Treating the $QY$ term in the above equation as an inhomogeneous term and noticing the solution to the homogeneous equation on its left side is $E$, then the equation  \eqref{2.19} can be turned into Volterra integral equations by using the method of variation of parameters as well as the boundary conditions \eqref{2.10}. These equations can be cast in terms of  $J_{\pm}$ as
\begin{equation}\label{2.20}
J_{-}(T,\zeta)=I+\int_{-\infty}^{T}e^{-i\zeta\Lambda(T-y)}Q(y)J_{-}(y,\zeta)e^{-i\zeta\Lambda(y-T)}dy,
\end{equation}
\begin{equation}\label{2.21}
J_{+}(T,\zeta)=I-\int_{T}^{\infty}e^{i\zeta\Lambda(y-T)}Q(y)J_{+}(y,\zeta)e^{i\zeta\Lambda(T-y)}dy.
\end{equation}

Thus, as long as the integrals on the right sides of the above Volterra equations converge, $J_{\pm}(T,\zeta)$ allow analytical continuations off the real axis $\zeta \in R$. The following proposition can easily be derived through the structure of the potential $Q$.
\begin{proposition}\label{p1}
 The first column of $J_{-}$ and the second column of $J_{+}$
can be analytically continued to the upper half plane $\zeta \in \C_{+}$ , while the second column of
$J_{-}$ and the first column of $J_{+}$ can be analytically continued to the lower half plane $\C_{-}$.
\end{proposition}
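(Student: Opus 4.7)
The plan is to read off analyticity column by column from the Volterra integral equations \eqref{2.20} and \eqref{2.21}. Since $\Lambda=\operatorname{diag}(1,-1)$ has eigenvalues $\pm 1$, conjugation by the diagonal exponential acts entrywise: for a $2\times 2$ matrix $M$, the $(i,j)$ entry of $e^{-i\zeta\Lambda(T-y)}\,M\,e^{-i\zeta\Lambda(y-T)}$ equals $M_{ij}\,e^{-i\zeta(T-y)(\lambda_i-\lambda_j)}$, so diagonal entries are left alone while the $(1,2)$ and $(2,1)$ entries pick up factors $e^{\mp 2i\zeta(T-y)}$.

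First I would split \eqref{2.20} into two scalar coupled Volterra systems, one for each column of $J_-$. Because $Q$ is purely off-diagonal, the coupled pair for the first column of $J_-$ features only the factor $e^{2i\zeta(T-y)}$, whose modulus is $e^{-2(\operatorname{Im}\zeta)(T-y)}$. The domain of integration forces $T-y\geq 0$, so this factor is bounded by $1$ precisely when $\operatorname{Im}\zeta\geq 0$. The second column of $J_-$ produces instead the factor $e^{-2i\zeta(T-y)}$, bounded for $\operatorname{Im}\zeta\leq 0$. Repeating the analysis for \eqref{2.21} (where now $y-T\geq 0$) shows that the first column of $J_+$ is controlled for $\operatorname{Im}\zeta\leq 0$, while the second column of $J_+$ is controlled for $\operatorname{Im}\zeta\geq 0$.

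Next I would iterate each of the four scalar column equations to generate a Neumann series. The hypothesis that $u(T)$ decays rapidly at $\pm\infty$ makes $\|Q(y)\|$ integrable, so combining with the uniform bound $|e^{\pm 2i\zeta(T-y)}|\leq 1$ on the appropriate closed half-plane yields, by an induction on the iteration count, a majorant of the form $\bigl(\int|u(y)|\,dy\bigr)^{n}/n!$ for the $n$-th term. This proves absolute and uniform convergence on compact subsets of the closed half-plane. Each Neumann iterate is an iterated integral of entire exponentials against an $L^{1}$ potential, hence analytic on the open half-plane, and Weierstrass's theorem transfers analyticity to the uniform limit, giving the desired extensions to $\C_+$ or $\C_-$.

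The main obstacle is more bookkeeping than analysis: one must keep track of which off-diagonal exponential survives in each of the four scalar Volterra equations and then pair its sign correctly with the sign of $T-y$ dictated by the direction of integration. Once that matching is right, convergence and analytic continuation follow from entirely standard Volterra/Neumann estimates.
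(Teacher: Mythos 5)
Your proposal is correct and follows essentially the same route as the paper: the paper likewise writes the first column of $J_{-}$ as the coupled scalar Volterra system \eqref{2.22}--\eqref{2.23}, observes that $e^{2i\zeta(T-y)}$ is bounded for $\zeta\in\C_{+}$ because $T-y\geq 0$, invokes the decay of $u$ for convergence, and treats the remaining columns "similarly." The only difference is that you make explicit the Neumann-series majorant and the Weierstrass limit argument, which the paper leaves implicit.
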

\begin{proof}
The integral equation \eqref{2.20} for the first column of $J_{-}$ , say $\left(\begin{array}{cc}
\varphi_{1} \\
\varphi_{2}
\end{array}\right)
$, is
\begin{equation}\label{2.22}
\varphi_{1}=1+\int_{-\infty}^{T}u(y)\varphi_{2}(y,\zeta)dy,
\end{equation}
\begin{equation}\label{2.23}
\varphi_{2}=-\int_{-\infty}^{T}u^{*}(y)\varphi_{1}(y,\zeta)e^ {2i\zeta(T-y)}dy.
\end{equation}
When $\zeta \in \C_{+}$, since $e^{2i\zeta (T-y)}$ in \eqref{2.23} is bounded, and $u(T)$ decays to zero sufficiently fast at large distances, both integrals in the above two equations converge. Thus the Jost solution $\left(\begin{array}{cc}
\varphi_{1} \\
\varphi_{2}
\end{array}\right)
$ can be analytically extended to $\C_{+}$. The analytic properties of the other Jost solutions $J_{+}$ can be obtained similarly.
\end{proof}

From Abel's identity, we find that $|J(T,\zeta)|$ is a constant for all $T$. Then using the boundary conditions \eqref{2.10}, we see that
\begin{equation}\label{2.13}
|J_{\pm}(T,\zeta)|=1,
\end{equation}
for all $(T,\zeta)$. Since $\Phi(T,\zeta)$ and $\Psi(T,\zeta)$ are both solutions of the linear equation \eqref{b}, they are linearly related by a scattering matrix $S(\zeta)$:
\begin{equation}\label{2.16}
\Phi(T,\zeta)=\Psi(T,\zeta)S(\zeta), \quad \zeta\in \R.
\end{equation}
i.e.,
\begin{equation}\label{2.17}
J_{-}=J_{+}ESE^{-1}, \quad \zeta\in \R,
\end{equation}
here $\R$ is the set of real numbers.

Because the potential $u(T,Z)$ can be reconstructed by using the scattering matrix $S(\zeta)$, so the analytical properties of $S(\zeta)$ need to be delineated first. If $(\Phi,\Psi)$ are  expressed as a collection of columns
\begin{equation}\notag
\Phi=(\phi_{1},\phi_{2}), \quad \Psi=(\psi_{1}, \psi_{2}),
\end{equation}
from Proposition 1 and $\Phi\equiv J_{-}E$ and $\Psi \equiv J_{+}E$, we have
\begin{equation}\notag
\Phi=(\phi_{1}^{+},\phi_{2}^{-}), \quad \Psi=(\psi_{1}^{-}, \psi_{2}^{+}),
\end{equation}
\begin{equation}\notag
\Phi^{-1}=\left(\begin{matrix}
\hat{\phi_{1}}^{-}\\
\hat{\phi_{2}}^{+}\end{matrix}
\right), \quad
\Psi^{-1}=\left(\begin{matrix}
\hat{\psi_{1}}^{+}\\
\hat{\psi_{2}}^{-}\end{matrix}
\right),
\end{equation}
where the superscripts $\pm$ indicate the half plane of analyticity for the underlying quantities. Since
\begin{equation}\notag
S=\Psi^{-1}\Phi=\left(\begin{matrix}
\hat{\psi_{1}}^{+}\\
\hat{\psi_{2}}^{-}\end{matrix}
\right)(\phi_{1}^{+},\phi_{2}^{-}),
\end{equation}
\begin{equation}\notag
S^{-1}=\Phi^{-1}\Psi=\left(\begin{matrix}
\hat{\phi_{1}}^{-}\\
\hat{\psi_{2}}^{+}\end{matrix}
\right)(\psi_{1}^{-},\psi_{2}^{+}),
\end{equation}
it is easy to see that scattering matrices $S$ and $S^{-1}$ have the following analyticity structures:
\begin{equation}\notag
S=\left(\begin{matrix}
s_{11}^{+},s_{12}\\s_{21},s_{22}^{-}\end{matrix}
\right), \quad
S^{-1}=\left(\begin{matrix}
\hat{s}_{11}^{-},\hat{s}_{12}\\ \hat{s}_{21},\hat{s}_{22}^{+}\end{matrix}
\right).
\end{equation}
The elements  without superscripts indicate that such  elements do not allow analytical extensions to $\C_{\pm}$ in general. From $S$ is a $2\times2$ matrix with unit determinant, we have
\begin{equation}\notag
\hat{s}_{11} = s_{22},\quad  \hat{s}_{22} =s_{11},\quad \hat{s}_{12}=-s_{12},\quad  \hat{s}_{21} =-s_{21}.
\end{equation}
Therefore, the analytic properties of $S^{-1}$ are apparent from the analytic properties of $S$.

In order to construct the RH problem, we define the Jost solutions
\begin{equation}\label{2.25}
P^{+}=(\phi_{1},\psi_{2})e^{i\zeta\Lambda T}=J_{-}H_{1}+J_{+}H_{2}
\end{equation}
are analytic in $\zeta \in \C_{+}$, and the Jost solutions
\begin{equation}\label{2.26}
(\psi_{1},\phi_{2})e^{i\zeta\Lambda T}=J_{+}H_{1}+J_{-}H_{2}
\end{equation}
are analytic in $\zeta \in \C_{-}$, here $H_{1}\equiv diag(1,0)$, $H_{2}\equiv diag(0,1)$.
In addition, from the Volterra integral equations \eqref{2.20}-\eqref{2.21}, we see that the large $\zeta$ asymptotics of these analytical functions are
\begin{equation}\label{2.28}
P^{+}(x,\zeta)\rightarrow I, \quad \zeta \in \C_{+}\rightarrow\infty,
\end{equation}
\begin{equation}\label{2.29}
(\psi_{1},\phi_{2})e^{i\zeta\Lambda T}\rightarrow I, \quad \zeta \in \C_{-}\rightarrow\infty.
\end{equation}
To obtain the analytic counterpart of $P^{+}$ in $\C_{-}$, we consider the adjoint scattering equation of \eqref{2.6}:
\begin{equation}\label{2.30}
K_{T}=-i\zeta[\Lambda,K]-KQ.
\end{equation}
Indeed, by utilizing the relation
\begin{equation}\label{2.31}
0=(JJ^{-1})_{T}=J_{T}J^{-1}+J(J^{-1})_{T}
\end{equation}
as well as the scattering equation \eqref{2.6}, we have
\begin{equation}\label{2.32}
J^{-1}_{T}=-i\zeta[\Lambda,J^{-1}]-J^{-1}Q,
\end{equation}
thus $J^{-1}(T,\zeta)$ satisfies the adjoint equation \eqref{2.30}. If we express $\Phi^{-1}$ and $\Psi^{-1}$ as a collection of rows
\begin{equation}\label{2.33}
\Phi^{-1}=\left(\begin{matrix}
\hat{\phi_{1}}\\
\hat{\phi_{2}}\end{matrix}
\right),\quad
\Psi^{-1}=\left(\begin{matrix}
\hat{\psi_{1}}\\
\hat{\psi_{2}}\end{matrix}
\right).
\end{equation}
Similarly, we can show that the adjoint Jost solutions
\begin{equation}\label{2.34}
P^{-}= e^{-i\zeta\Lambda x}\left(\begin{matrix}
\hat{\phi_{1}}\\
\hat{\psi_{2}}\end{matrix}
\right)=H_{1}J_{-}^{-1}+H_{2}J_{+}^{-1}
\end{equation}
are analytic in $\zeta \in \C_{-}$. In addition,
\begin{equation}\label{2.36}
P^{-}(T,\zeta)\rightarrow I, \quad \zeta \in C_{-}\rightarrow \infty.
\end{equation}

The anti-Hermitian property \eqref{2.9} of the potential matrix $Q$ gives rise to involution properties in the scattering matrix as well as in the Jost solutions. Indeed, by taking the Hermitian of the scattering equation \eqref{2.6} and utilizing the anti-Hermitian property of the potential matrix $Q^{\dag}=-Q$, we get
\begin{equation}\notag
J^{\dag}_{T} = -i\zeta^{*}[\wedge,J^{\dag}]-J^{\dag}Q.
\end{equation}
Thus $J^{\dag}_{\pm} (T,\zeta^{*})$ satisfy the adjoint scattering equation \eqref{2.30}.
However, $J^{-1}_{\pm} (T,\zeta)$ satisfies this adjoint equation as well. Consequently, $J^{\dag}_{\pm} (T,\zeta^{*})$ and $J^{-1}_{\pm} (T,\zeta)$  must be linearly dependent on each other. Recalling the boundary
conditions \eqref{2.10} of Jost solutions$J _{\pm}$, we further see that $J^{\dag}_{\pm} (T,\zeta^{*})$ and $J^{-1}_{\pm} (T,\zeta)$  have the
same boundary conditions at $T \rightarrow \pm\infty$, and hence they must be the same solutions of the
adjoint equation \eqref{2.30}, i.e. $J^{\dag}_{\pm}(\zeta^{*}) = J^{-1}_{\pm}(\zeta)$.
From this involution property as well as the definitions \eqref{2.25} and \eqref{2.34} for $P^{\pm}$, we see that the analytic solutions $P^{\pm}$ satisfy the involution property as well:
\begin{equation}\label{2.57}
(P^{+})^{\dag}(\zeta^{*}) = P^{-}(\zeta).
\end{equation}
In addition, in view of the scattering relation \eqref{2.17} between $J_{+}$ and $J_{-}$, we see that $S$ also satisfies the involution property:
\begin{equation}\label{2.58}
S^{\dag}(\zeta^{*}) = S^{-1}(\zeta).
\end{equation}

\subsection{Matrix Riemann-Hilbert problem}

On the real line, using \eqref{2.17}, \eqref{2.25} and \eqref{2.34}, we can easily get
\begin{equation}\label{2.44}
P^{-}(T,\zeta)P^{+}(T,\zeta)= G(T,\zeta),\quad \zeta\in \mathbb{R},
\end{equation}
where
\begin{equation}\notag
G = E(H_{1} +H_{2}S)(H_{1}+S^{-1}H_{2})E^{-1}=E\left(\begin{matrix}
1 \ \ \,\hat{s}_{12}\\s_{21}\ \ 1\end{matrix}
\right)E^{-1}.
\end{equation}
Equation \eqref{2.44} forms a matrix RH problem. The normalization condition for this RH problem can be obtained from \eqref{2.28} and \eqref{2.36} as
\begin{equation}\label{2.46}
P^{\pm}(T,\zeta)\rightarrow I, \ \ \ \zeta\in \infty,
\end{equation}
which is the canonical normalization condition. If this RH problem can be
solved from the given scattering data $(s_{21} ,\widehat{s}_{12})$, then the potential $Q$ can be reconstructed from the asymptotic expansion of its solution at large $\zeta$. Indeed, recall that $P^{+}$ and $P^{-}$ are solutions of the scattering problem \eqref{2.6} and its adjoint problem \eqref{2.30}, respectively.

Recalling the definitions \eqref{2.25} and \eqref{2.34} of $P^{\pm}$ as well as the scattering relation \eqref{2.17}, we have that
\begin{equation}\label{2.52}
|P^{+}|= \widehat{s}_{22} = s_{11},\quad |P^{-}|= s_{22} = \widehat{s}_{11}.
\end{equation}
The RH problem \eqref{2.44} is called regular when $|P^{\pm}| \neq 0$. First, the solution of the regular RH problem is considered. Namely, $\widehat{s}_{22} = s_{11}= s_{22} = \widehat{s}_{11}\neq0$ in their respective planes of analyticity. Under the canonical normalization condition \eqref{2.46}, the solution to this regular RH problem is unique\cite{yjks}. This unique solution to the regular matrix RH problem \eqref{2.44} defies explicit expressions. Its formal solution, however, can be given in terms of a Fredholm integral equation.

To use the Plemelj-Sokhotski formula on the regular RH problem \eqref{2.44}, first rewrite the \eqref{2.44} as
\[
\left(P^{+}\right)^{-1}(\zeta)-P^{-}(\zeta)=\widehat{G}(\zeta)\left(P^{+}\right)^{-1}(\zeta), \quad \zeta \in \mathbb{R}
\]
where
\[
\widehat{G}=I-G=-E\left(\begin{array}{cc}
0 & \hat{s}_{12} \\
s_{21} & 0
\end{array}\right) E^{-1},
\]
$\left(P^{+}\right)^{-1}(\zeta)$ is analytic in $\mathbb{C}_{+},$ and $P^{-}(\zeta)$ is analytic in $\mathbb{C}_{-}$. Applying the Plemelj-Sokhotski formula and utilizing the canonical boundary conditions \eqref{2.46}, the solution to the
regular RH problem \eqref{2.44} is provided by the following integral equation:
\[
\left(P^{+}\right)^{-1}(\zeta)=I+\frac{1}{2 \pi i} \int_{-\infty}^{\infty} \frac{\widehat{G}(\xi)\left(P^{+}\right)^{-1}(\xi)}{\xi-\zeta} d \xi, \quad \zeta \in \mathbb{C}_{+}.
\]

In the more general case, the RH problem \eqref{2.44} is not regular; i.e., $|P^{+}(\zeta)|$ and $|P^{-}(\zeta)|$ can be zero at certain discrete locations $\zeta_{k} \in \mathbb{C}_{+}$ and $\bar{\zeta}_{k} \in \mathbb{C}_{-}, 1 \leq k \leq N$, where $N$ is the number of these zeros. In view of \eqref{2.52}, we see that $\left(\zeta_{k}, \bar{\zeta}_{k}\right)$ are zeros of the scattering coefficients $\hat{s}_{22}(\zeta)$ and $s_{22}(\zeta)$. Due to the involution property \eqref{2.58}, we have the involution relation
\begin{equation}\label{2.66}
\bar{\zeta}_{k}=\zeta_{k}^{*}.
\end{equation}
For simplicity, we assume that all zeros $\left\{\left(\zeta_{k}, \bar{\zeta}_{k}\right), k=1, \ldots, N\right\}$ are simple zeros of $\left(\hat{s}_{22}, s_{22}\right)$ which is the generic case. In this case, the kernels of $P^{+}\left(\zeta_{k}\right)$ and $P^{-}\left(\bar{\zeta}_{k}\right)$ contain only a single column vector $\left|v_{k}\right\rangle$ and row vector $\left\langle \overline{v}_{k}\right|$, respectively. I.e.,
\begin{equation}\label{2.67}
P^{+}\left(\zeta_{k}\right) \left|v_{k}\right\rangle=0, \quad \left\langle \overline{v}_{k}\right| P^{-}\left(\bar{\zeta}_{k}\right)=0, \quad 1 \leq k \leq N.
\end{equation}
Taking the Hermitian of the first equation in \eqref{2.67} and utilizing the involution properties \eqref{2.57} and \eqref{2.66}, 
\begin{equation}\label{eqvk}
\left|v_{k}\right\rangle^{\dagger} P^{-}\left(\bar{\zeta}_{k}\right)=0
\end{equation} can be got. Then comparing equation \eqref{eqvk} with the second equation in \eqref{2.67}, we know that eigenvectors $\left(\left|v_{k}\right\rangle, \left\langle \overline{v}_{k}\right|\right)$
satisfy the involution property $
\left\langle \overline{v}_{k}\right|=\left|v_{k}\right\rangle^{\dagger}$.
Vectors $\left|v_{k}\right\rangle$ and $\left\langle \overline{v}_{k}\right|$ are $T$ dependent, taking the $T$ derivative to the eq.\eqref{2.67} and recalling that $P^{+}$ satisfies the scattering equation \eqref{2.6}, we have
\[
P^{+}\left(\zeta_{k} ; T\right)\left(\frac{d \left|v_{k}\right\rangle}{d T}+i \zeta_{k} \Lambda \left|v_{k}\right\rangle\right)=0.
\]
Due to our assumption, the only vector in the kernel of $P^{+}\left(\zeta_{k} ; T\right)$ is $\left|v_{k}\right\rangle$. Thus
\[
\frac{d \left|v_{k}\right\rangle}{d T}+i \zeta_{k} \Lambda \left|v_{k}\right\rangle=\alpha_{k}(x) \left|v_{k}\right\rangle,
\]
where $\alpha_{k}(T)$ is a scalar function. The solution to the above equation is
\[
\left|v_{k}(T)\right\rangle=e^{-i \zeta_{k} \Lambda T} \left|v_{k0}\right\rangle e^{\int_{T_{0}}^{T} \alpha_{k}(y) dy},
\]
where $\left|v_{k0}\right\rangle=\left|v_{k}(T)\right\rangle|_{T=0}$. Without loss of generality, we take $\alpha_{k}=0$ and write the solution $\left|v_{k}(T)\right\rangle$ as
\begin{equation}\label{2.99}
\left|v_{k}(T)\right\rangle=e^{-i\zeta_{k} \Lambda T}\left|v_{k0}\right\rangle.
\end{equation}
Following similar calculations for $\bar{v}_{k},$ we readily get
\[
\left\langle \overline{v}_{k}(T)\right|=\left\langle \bar{v}_{k0}\right| e^{i \zeta_{k} \Lambda T}.
\]
These two equations give the simple $T$ dependence of vectors $\left|v_{k}(T)\right\rangle$ and $\left\langle \bar{v}_{k}(T)\right|$.
The zeros $\left\{\left(\zeta_{k}, \bar{\zeta}_{k}\right)\right\}$ of $|P^{\pm}(\zeta)|$ as well as vectors ${\left|v_{k}\right\rangle, \left\langle \overline{v}_{k}\right|}$ in the kernels of $P^{+}\left(\zeta_{k}\right)$ and $P^{-}\left(\bar{\zeta}_{k}\right)$ constitute the discrete scattering data which is also needed to solve the general RH problem \eqref{2.44}.

Now introduce a matrix function that could remove all the zeros of this RH problem. For this purpose, first introduce the rational matrix function:
\[
\Gamma_{j}(\zeta)=I+\frac{\bar{\zeta}_{j}-k_{j}}{\zeta-\bar{\zeta}_{j}} \frac{\left|v_{j}\right\rangle\left\langle \overline{v}_{j}\right|}{\left\langle \overline{v}_{j} | v_{j}\right\rangle},
\]
and its inverse matrix
\[
\Gamma_{j}(\zeta)^{-1}=I+\frac{\zeta_{j}-\bar{\zeta}_{j}}{\zeta-\zeta_{j}} \frac{\left|v_{j}\right\rangle\left\langle \overline{v}_{j}\right|}{\left\langle \overline{v}_{j} | v_{j}\right\rangle},
\]
where $\begin{array}{l}
\left|v_{j}\right\rangle \in \operatorname{Ker}\left(P_{+} \Gamma_{1}^{-1} \cdots \Gamma_{j-1}^{-1}\left(\zeta_{j}\right)\right),\left\langle \overline{v}_{j}|=| v_{j}\right\rangle^{\dagger}.
\end{array}$
Now, introducing the matrix function:
$$
\Gamma(\zeta)=\Gamma_{N}(\zeta)\Gamma_{N-1}(\zeta) \cdots \Gamma_{1}(\zeta),
$$
a calculation gives
\begin{equation}
\begin{aligned}\notag
\Gamma(\zeta) &=I+\sum_{j, k=1}^{N} \frac{\left|v_{j}\right\rangle\left(M^{-1}\right)_{j k}\left\langle \overline{v}_{k}\right|}{\zeta-\bar{\zeta}_{k}}
\end{aligned},
\end{equation}
\begin{equation}
\begin{aligned}\notag
\Gamma^{-1}(\zeta) &=I-\sum_{j, k=1}^{N} \frac{\left|v_{j}\right\rangle\left(M^{-1}\right)_{j k}\left\langle \overline{v}_{k}\right|}{\zeta-\zeta_{j}}
\end{aligned},
\end{equation}
where $M$ is a $N \times N$ matrix with its $(j, k)$ th element given by
\begin{equation}\label{2.74}
\begin{array}{c}
M_{j k}=\frac{\left\langle \overline{v}_{j}|v_{k}\right\rangle}{\bar{\zeta}_{j}-\zeta_{k}}, \quad 1 \leq j, k \leq N. \\
\end{array}
\end{equation}
Based on the above argument, we are confident that  $\Gamma(T, \zeta)$ cancels all the zeros of $P_{\pm},$ and the analytic solutions can be represented as
\[
\begin{array}{l}
P^{+}(\zeta)=\widehat{P}^{+}(\zeta) \Gamma(\zeta), \\
P^{-}(\zeta)=\Gamma^{-1}(\zeta) \widehat{P}^{-}(\zeta).
\end{array}
\]
Here, $\widehat{P}^{\pm}(\zeta)$ are meromorphic $2 \times 2$ matrix functions in $\mathbb{C}_{+}$ and $\mathbb{C}_{-}$, respectively, with finite number of poles and specified residues.  Therefore, all zeros of the RH problem have been eliminated, and a regular RH problem
\[
\widehat{P}^{-}(\zeta) \widehat{P}^{+}(\zeta)=\Gamma(\zeta) G(\zeta) \Gamma^{-1}(\zeta), \quad \zeta \in \mathbb{R},
\]
with boundary condition:  $\widehat{P}^{\pm}(\zeta)=P^{\pm}(\zeta)\Gamma^{-1} \rightarrow I$ as $\zeta \rightarrow \infty$ can be formulated. Then $ P^{+}(\zeta)=\Gamma$  when $\zeta \rightarrow \infty$.

\subsection{Solution of the Riemann-Hilbert Problem}
\ \ \ \
In this subsection, how to solve the matrix RH problem \eqref{2.44} in the complex $\zeta$ plane is discussed. In most of these discussions, $T$ is a dummy variable, hence will be
suppressed in our notations. Thus, if we expand $P$ at large $\zeta$ as
\begin{equation}\label{2.47}
P(T,\zeta) = I +\zeta^{-1}P_{1}^{\pm}(T)+O(\zeta^{-2} ), \quad \zeta\rightarrow\infty,
\end{equation}
and insert \eqref{2.47} into \eqref{2.6} and \eqref{2.30}, then by comparing terms of the same power in $\zeta^{-1}$, we find at $O(1)$ that
\begin{equation}\label{2.48}
Q = i[\wedge,P_{1}^{+}]=-i[\wedge,P_{1}^{-}].
\end{equation}
Hence the solution $u$ can be reconstructed by
\begin{equation}\label{2.49}
u = 2i( P_{1}^{+})_{12}= -2i(P_{1}^{-})_{12}.
\end{equation}
This completes the inverse scattering process. 
Continuing the above calculations, at $O(\zeta^{-1})$ in \eqref{2.6}, we get
\begin{equation}\label{2.50}
diag(P^{+}_{1})_{T} = diag(QP^{+}_{1}).
\end{equation}

\subsection{Time Evolution of Scattering Data}
\ \ \ \
In this subsection, we determine the time evolution of the scattering data. First, the time evolution of the scattering matrices $S$ and $S^{-1}$ is analyzed. The definition \eqref{2.17} for the scattering matrix can be rewritten as
\[
J_{-} E=J_{+} E S, \quad \zeta \in \mathbb{R}.
\]
Since $J_{\pm}$ satisfies the temporal equation \eqref{2.6} of the Lax pair, then multiply \eqref{2.6} by the time-independent diagonal matrix $E=e^{-i \zeta \Lambda T}$. Due to $J_{-} E,$ i.e., $J_{+} E S,$ satisfies the same temporal equation \eqref{2.6} as well. Thus, by inserting $J_{+} E S$ into \eqref{2.6}, taking the limit $T \rightarrow+\infty,$ and recalling the boundary condition \eqref{2.10} for $J_{+}$ as well as the fact that $V \rightarrow 0$
as $T \rightarrow \pm \infty$, then
\[
S_{Z}=-(4i\beta\zeta^3+2i\alpha\zeta^{2})[\Lambda, S].
\]
Similarly, inserting $J_{-} E S^{-1}$ into \eqref{2.6}, taking the limit $T \rightarrow-\infty,$ and recalling the asymptotics \eqref{2.10} for $J_{-},$ 
\begin{equation}\label{2.103}
\left(S^{-1}\right)_{Z}=-2 i \zeta^{2}\left[\Lambda, S^{-1}\right].
\end{equation}
From these two equations, 
\begin{equation}\label{2.104}
\frac{\partial \hat{s}_{22}}{\partial Z}=\frac{\partial s_{22}}{\partial Z}=0,
\end{equation}
and
\begin{equation}\label{2.105}
\frac{\partial \hat{s}_{12}}{\partial Z}=-(4i\beta\zeta^3+2i\alpha\zeta^{2}) \hat{s}_{12}, \quad \frac{\partial s_{21}}{\partial Z}=(4i\beta\zeta^3+2i\alpha\zeta^{2}) s_{21}
\end{equation}
 can be derived.
The equation\eqref{2.104} shows that $\hat{s}_{22}$ and $s_{22}$ are time independent. Recall that $\zeta_{k}$ and $\bar{\zeta}_{k}$ are zeros of $|P^{\pm}(\zeta)|$, i.e., they are zeros of $\hat{s}_{22}(\zeta)$ and $s_{22}(\zeta)$ in view of \eqref{2.52}. Thus $\zeta_{k}$ and $\bar{\zeta}_{k}$ are also time independent. The two equations in \eqref{2.105} give the time evolution for the scattering data $\hat{s}_{12}$ and $s_{21},$ which is
\[
\hat{s}_{12}(Z ; \zeta)=\hat{s}_{12}(0 ; \zeta) e^{-(4i\beta\zeta^3+2i\alpha\zeta^{2})Z}, \quad s_{21}(Z ; \zeta)=s_{21}(0 ; \zeta) e^{(4i\beta\zeta^3+2i\alpha\zeta^{2})Z}.
\]
Next we determine the time dependence of the scattering data $v_{k}$ and $\left\langle \overline{v}_{j}\right|$. This determination is similar to that for the $T$-dependence of $v_{k}$ and $\left\langle \overline{v}_{j}\right|$ at the end of the previous subsection. We also start with \eqref{2.67} for $\left|v_{k}\right\rangle$ and $\left\langle \overline{v}_{j}\right|$. Taking the time derivative to the $\left|v_{k}\right\rangle$ equation and recalling that $P^{+}$ satisfies the temporal equation $\eqref{2.67}$, then
\[
P^{+}\left(\zeta_{k} ; T,Z\right)\left(\frac{\partial \left|v_{k}\right\rangle}{\partial Z}+(4i\beta\zeta^3+2i\alpha\zeta^{2})\Lambda \left|v_{k}\right\rangle\right)=0,
\]
i.e,
\[
\frac{\partial \left|v_{k}\right\rangle}{\partial Z}+(4i\beta\zeta^3+2i\alpha\zeta^{2})\left|v_{k}\right\rangle=0.
\]
Combining it with the spatial dependence \eqref{2.99}, we get the temporal and spatial dependence for the vector $\left|v_{k}\right\rangle$ as
\begin{equation}\label{2.109}
\left|v_{k}\right\rangle(T, Z)=e^{-i \zeta \Lambda T-(4i\beta\zeta^3+2i\alpha\zeta^{2})\Lambda Z} \left|v_{k0}\right\rangle,
\end{equation}
where $\left|v_{k0}\right\rangle$ is a constant. Similar calculations for $\left\langle \overline{v}_{k}\right|$ give
\[
\left\langle \overline{v}_{k}\right|(T, Z)=\left\langle \overline{v}_{k0}\right| e^{i \bar{ \zeta_{k}}T+(4i\beta\overline{\zeta}^3+2i\alpha\overline{\zeta}^{2})Z}.
\]

The scattering data needed to solve this non-regular RH problem is
\begin{equation}\label{2.95}
\left\{s_{21}(\xi), \hat{s}_{12}(\xi), \xi \in \mathbb{R} ; \quad \zeta_{k}, \bar{\zeta}_{k}, \left|v_{j}\right\rangle, \left\langle \overline{v}_{k}\right|, 1 \leq k \leq N\right\}
\end{equation}
 which is called the minimal scattering data. Of this scattering data, vectors $\left|v_{k}\right\rangle$ and $\left\langle \overline{v}_{k}\right|$ are
$T$ dependent, while the others are not. From this scattering data at any later time, we can solve the non-regular RH problem \eqref{2.44} with zeros \eqref{2.67}, and thus reconstruct the solution $u(T, Z)$ at any later time from the formula \eqref{2.49}. This completes the inverse scattering transform process for solving the Hirota equation \eqref{hirota}.

\section{N-Soliton Solutions for the Hirota and IVC-Hirota equation}

In this section, the N-soliton formula for Hirota is derived by considering the simple zeros in the RH problem. Then the  N-soliton formula  for the IVC-Hirota equation  can be constructed by a special transformation.
We also give the dynamic analysis for  some interesting exact solutions of the Hirota and IVC-Hirota equation.

\subsection{N-Soliton matrix for the Hirota and IVC-Hirota equation}

It is well known that when scattering data $\hat{s}_{12}=s_{21}=0$, the soliton solutions  correspond to the reflectionless potential. Then jump matrix $G = I$, $\widehat{G} = 0$. Due to $ P^{+}(\zeta)=\Gamma, \zeta\rightarrow\infty$. Recall to \eqref{2.49}, we can get
\begin{equation}\label{2.120}
u(T,Z)=2i(\sum_{j,k=1}^{N}\left|v_{j}\right\rangle(M^{-1})_{jk}\left\langle \overline{v}_{k}\right|)_{12}.
\end{equation}
Here vectors $v_{j}$ are given by \eqref{2.109}, $\left\langle \overline{v}_{k}\right|=\left|v_{k}\right\rangle^{\dag}$, and matrix $M$ is given by \eqref{2.74}. Without loss of generality, taking $\left|v_{k0}\right\rangle= (c_{k}, 1)^{T}$ in the following discussion. In addition, introduce the notation
\begin{equation}\label{2.121}
\theta_{k}=-i\zeta_{k}T-(4i\beta \zeta_{k}^{3}+2i\alpha \zeta_{k}^{2})Z.
\end{equation}
Then the above solution $u$ can be written out explicitly as
\begin{equation}\label{2.122}
u(T,Z)=2i\sum_{j,k=1}^{N}c_{j}e^{\theta_{j}-\theta_{k}^{*}}(M^{-1})_{jk},
\end{equation}
where the elements of the $N \times N$ matrix $M$ are given by
\begin{equation}\label{2.123}
M_{jk}=\frac{e^{-(\theta_{k}+\theta_{j}^{*})}
+c_{j}^{*}c_{k}e^{\theta_{k}+\theta_{j}^{*}}}{\zeta_{j}^{*}-\zeta_{k}}.
\end{equation}
Notice that $M^{-1}$ can be expressed as the transpose of $M's$ cofactor matrix divided by $|M|$. In addition, remember that the determinant of a matrix can be expressed as the sum of its elements along a row or column multiplying their corresponding cofactor. Therefore, the solution of the general Hirota equation can be rewritten as
\begin{equation}\label{2.124}
u(T,Z) = -2i\frac{|F|}{|M|},
\end{equation}
where $F$ is the following $(N +1)\times(N +1)$ matrix
\begin{equation}
\left(\begin{matrix}
0&e^{-\theta_{1}^{*}}&...&e^{-\theta_{N}^{*}}\\
c_{1}e^{\theta_{1}}&M_{11}&...&M_{N1}\\
.&.&.&.\\
.&.&.&.\\
.&.&.&.\\
c_{N}e^{\theta_{N}}&M_{1N}&...&M_{NN}
\end{matrix}
\right).
\end{equation}

There is a transformation relationship \begin{equation}\label{tr}
q=f(z)u(T, Z) e^{ig(t, z)},
\end{equation}
which can map IVC-Hirota equation \eqref{vcheq} to the Hirota equation \eqref{hirota}.
Where,
\begin{equation}\notag
\begin{gathered}
f(z)=\sqrt{\frac{\alpha_{1}}{\alpha_{4}}},\enspace 
Z=-\frac{\sqrt{2 \delta}}{12 \beta} \int \alpha_{1} d z, \enspace
T=\frac{\sqrt{2 \delta}}{2}( t-(\frac{\alpha^{2}
}{36 \beta^{2}}- \delta)\int \alpha_{1} dz), \\ and \enspace
g(t, z)=-\frac{6 \beta \delta+\alpha \sqrt{2 \delta}}{6 \beta} t-\frac{216 \delta^{2} \beta^{3}+54 \alpha \beta^{2} \delta \sqrt{2 \delta}-\alpha^{3} \sqrt{2 \delta}}{324 \beta^{3}} \int \alpha_{1} dz.
\end{gathered}
\end{equation}
Transformation relationship \eqref{tr} reveals the integrability of IVC-Hirota equation.

Using this transformation relationship, the $n$-soliton matrix solution of IVC-Hirota equation can be constructed as follows
 \begin{equation}\label{vchs}
 q=-2i\frac{|F|}{|M|}f(z)e^{ig(t, z)}.
\end{equation}

\subsection{Exact solutions for the Hirota and IVC-Hirota equation}

When $N =1$, $c_{1}=1$ and $ \zeta_{1}=\xi+i\eta$, then the one-soliton solution for the Hirota equation can be derived from \eqref{2.124} 
 \begin{equation}
u_{1}=2\eta e^{(-8i\beta \xi^{3}-4i\alpha \xi^{2}+24i\beta\eta^{2}\xi+4i\eta^{2})Z+(1-2i\xi)T}sech(2\eta(-4\beta \eta^{2}+12\beta \xi^{2}+4\alpha \xi)Z+2\eta T),
\end{equation}
which is a classical bell-shaped soliton solution.

Using the transformation relationship \eqref{tr}, the one-soliton solution of IVC-Hirota equation is shown as
\begin{equation}
q_{1}=2\sqrt{\frac{\alpha_{1}(z)}{\alpha_{4}(z)}}\eta e^{\frac {A}{324\,{\beta}^{3}}}sech(B),
\end{equation}
\begin{equation}\notag
\begin{split}
A&=(\xi\,\beta+\frac{\alpha}{6})(216\,i\sqrt {2\delta}( {\xi}^{2}{\beta}^{2}-3\,{\eta}^{2} {\beta}^{2}+\frac{\alpha\,\beta\,\xi}{3}+\frac{\,{\alpha}^{2} }{36})-324\,i\sqrt {2}
  {\beta}^{2}{\delta}^{\frac{3}{2}}-216\,i{\beta}^{3}{\delta}^{2} ) \int \!\alpha_{1}(z) \,{\rm d}z\\&-324\,i{\beta}^{2}
( \sqrt{2\delta}( \xi\,\beta+\frac{\alpha}{6})+
\beta\,\delta)t,\\
B&=\eta\sqrt {2\delta} ((\frac{2}{3}\,{\eta
}^{2}-2\,{\xi}^{2}+\delta-\frac{2\alpha\xi}{3\beta}-\frac{{\alpha
}^{2}}{18{\beta}^{2}}) \int \!\alpha1(z) \,{\rm d}z+t).
\end{split}
\end{equation}
The introduction of the integral term enriches the dynamic behavior of the solution of the variable coefficient equation. It can be seen from the expression that $\alpha_{1}(z)$ affects the trajectory of the solution $q_{1}$, and $\frac{\alpha_{1}(z)}{\alpha_{4}(z)}$ affects the amplitude.  The central trajectory equation of solution $q_{1}$ is
$$t=(2\,{\xi}^{2}-\frac{2}{3}\,{\eta
}^{2}-\delta+\frac{2\alpha\xi}{3\beta}+\frac{{\alpha
}^{2}}{18{\beta}^{2}}) \int \!\alpha1(z) \,{\rm d}z.$$
When $2\,{\xi}^{2}-\frac{2}{3}\,{\eta
}^{2}-\delta+\frac{2\alpha\xi}{3\beta}+\frac{{\alpha
}^{2}}{18{\beta}^{2}}=0$, the dynamic image $ q_{1}$ shows the shape of a common bell soliton which is similar to the soliton solution of constant coefficient equation. 

To simplify the expression of the solution $q_{1}$, taking  $\alpha=0$, $\xi=1$ and  $\eta=1$,  then
 \begin{equation}
|q_{1}|^{2}=4\,\left|{\frac{\alpha_{1}(z)}{\alpha_{4}(z)}}\right|{\rm sech}^{2}(\sqrt {2\delta}((\delta-\frac{4}{3})
\int \!\alpha_{1}(z){\rm d}z+t)).
\end{equation}
It is needed to note that when $\alpha=0$, the value of $\beta$ has no effect on the solution $q_{1}$. The central trajectory equation of solution $q_{1}$ is
$t=(\frac{4}{3}-\delta) \int \!\alpha1(z) \,{\rm d}z.$  The dynamic image $ q_{1}$ shows the shape of a bell soliton when $\delta=\frac{4}{3}$. In other words, due to the integral term, the form of the solution becomes more abundant. 
The dynamic evolution diagram of $|q_{1}|^{2}$ is symmetric about  $z$ axis (i.e. $|q_{1}|^{2}$ is an even function about $z$) when  $\alpha_{1}$ and $\alpha_{4}$ are odd numbers. Moreover, the amplitude of the solution $q_{1}$ is constant $2$ with $\alpha_{1}=\pm \alpha_{4}$.  $\alpha_{1}(z)$ and $\alpha_{4}(z)$ are dispersion and nonlinear effects, respectively. Dispersion broadens the waveform and nonlinear effects narrow it. Under certain conditions, the two effects reach a balance and maintain waveform stability. Next, we study the specific effects of different nonlinear terms and dispersion terms on the dynamic behaviour of the solutions. In order to further study the effect of the dispersion and nonlinear term on the dynamics of the solution, we give the different excitation states of  $\alpha_{1}$ and $\alpha_{4}$.  
 
First, we let the dispersion term be in the simplest polynomial form. For example, fixed the coefficient $\alpha_{1}(z)=z^{n}$, then the center trajectory equation is $t=\frac{1}{n+1}(\frac{4}{3}-\delta)z^{n+1}$.  It can  also be seen that the value of $\delta$ has a great influence on the propagation path of the solution. 
In particular, when $\delta=\frac{4}{3}$, the dynamic behavior of the solution of the variable coefficient equation is similar to that of the constant coefficient equation. 
When $n=1$, we can get $|q_{1}|^{2}=4\left|\frac{z}{\alpha_{4}}\right|sech^{2}(\sqrt {2\delta}((\delta-\frac{4}{3})\frac{1}{2}z^{2}+t))$ and its center trajectory equation is $t=\frac{1}{6}z^{2}$. In order to construct the meaningful solutions which are non-singularity and convergent, we can take $\alpha_{4}=z$, then the amplitude of  $q_{1}$ is $2$. The dynamic behaviour of the one-soliton solution in this case takes on the form of a parabola symmetric on the $z$ axis, which can be seen in the Fig.\eqref{q111}. We also can take $\alpha_{4}(z)=z^{2}+1$, at this case, the amplitude of the solution increases on the interval $z$ belongs to $(-\infty, -1)$, $(0, 1)$, and decreases on the interval  $z$ belongs to $(-1, 0)$, $(1, +\infty)$, with the minimum value $0$ at $z=0$ and the maximum value $\sqrt{2}$ at $z=\pm1$, which can be seen in the Fig.\eqref{q112}. When the dispersion term $\alpha_{1}(z)$ takes the form of other polynomials, such as $z+1$, we can see the dynamic evolution diagram  in  Fig.\eqref{q113} which is similar to  $\alpha_{1}(z)=z$ except axis of symmetry. When $n=2$, then the  center trajectory equation is $t=\frac{1}{3}(\frac{4}{3}-\delta)z^{3}$. Taking  $\alpha_{4}(z)=z^{2}$, the dynamic evolution diagram is plotted in Fig.\eqref{q121}. Taking $\alpha_{4}=z^{2}+1$, the dynamic evolution diagram of the solution is presented in Fig.\eqref{q122}. Let $\alpha_{1}(z)=1+z^{2}$ and $\alpha_{4}=1+10z^{2}$, we can obtain a solution with the amplitude maximizes at the origin and decreases as $z$ goes to infinity, which can be seen in Fig.\eqref{q123}. 

\begin{figure}[ht!]
\centering
\subfigure[]{\label{q111}
\begin{minipage}[b]{0.25\textwidth}
\includegraphics[width=3.4cm]{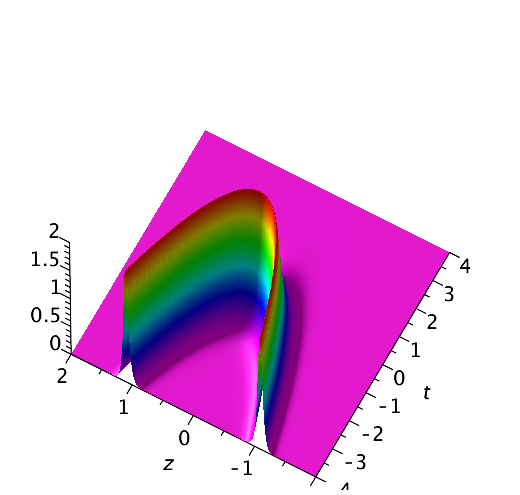}
\end{minipage}
}
\subfigure[]{\label{q112}
\begin{minipage}[b]{0.25\textwidth}
\includegraphics[width=3.4cm]{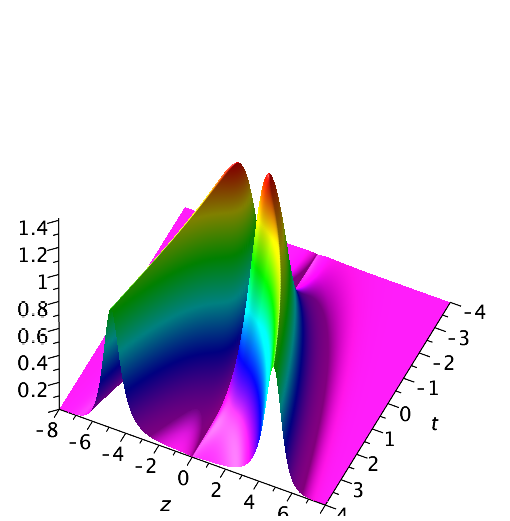}
\end{minipage}
}
\subfigure[]{\label{q113}
\begin{minipage}[b]{0.25\textwidth}
\includegraphics[width=3.4cm]{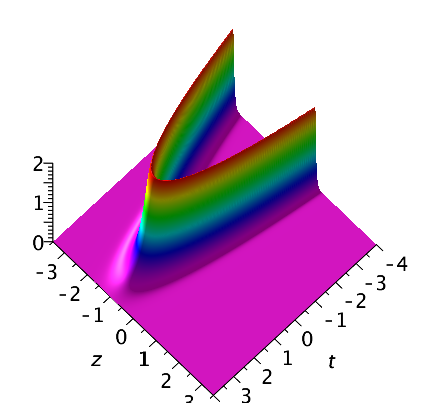}
\end{minipage}
}
\subfigure[]{\label{q121}
\begin{minipage}[b]{0.25\textwidth}
\includegraphics[width=3.4cm]{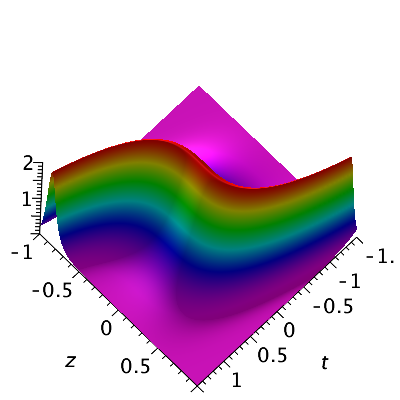}
\end{minipage}
}
\subfigure[]{\label{q122}
\begin{minipage}[b]{0.25\textwidth}
\includegraphics[width=3.4cm]{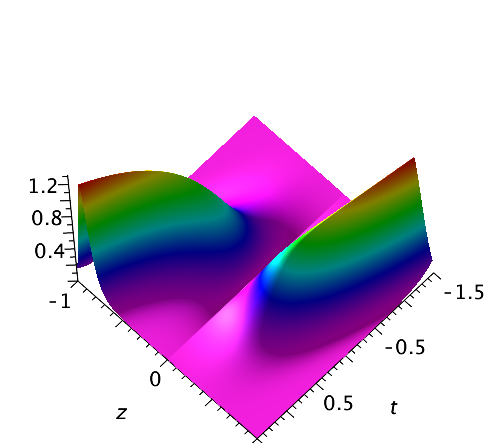}
\end{minipage}
}
\subfigure[]{\label{q123}
\begin{minipage}[b]{0.25\textwidth}
\includegraphics[width=3.4cm]{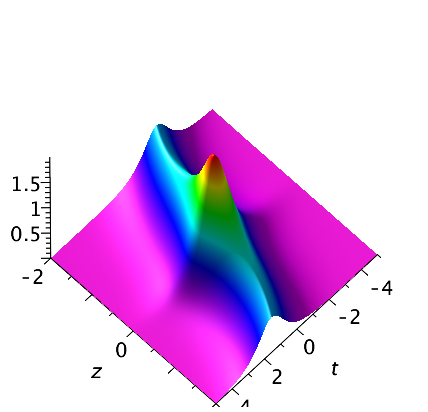}
\end{minipage}
}

\caption{ The evolution plot of 1-soliton solutions in the IVC-Hirota equation: (a) $\alpha_{1}(z)=\alpha_{4}(z)=z$ and $\delta=8$;
(b) $\alpha_{1}(z)=z$, $\alpha_{4}(z)=z^{2}+1$ and $\delta=1$;
(c) $\alpha_{1}(z)=z+1$, $\alpha_{4}(z)=1+z$ and $\delta=5$;
(d) $\alpha_{1}(z)=z^{2}$, $\alpha_{4}(z)=z^{2}$ and $\delta=8$;
(e) $\alpha_{1}(z)=z^{2}$, $\alpha_{4}(z)=z^{2}+1$ and $\delta=8$;
(f) $\alpha_{1}(z)=z^{2}+1$, $\alpha_{4}(z)=10z^{2}+1$ and $\delta =1$.}
\label{1z}
\end{figure}

Second, we can obtain $|q_{1}|^{2}=4\left|\frac{sin(kz)}{\alpha_{4}(z)}\right|sech^{2}(\sqrt {2\delta}(\frac{1}{k}(\frac{4}{3}-\delta)cos(kz)+t))$ when considering periodic functions  $\alpha_{1}(z)=sin(kz)$ as excitation function. Now the center trajectory of the solution $q_{1}$ is a cosine wave, where $k$ determines the period and $\delta$ has a big effect on the shape of the trajectory. The dynamic evolution diagram of different parameters is plotted in Fig.\eqref{2z}.

\begin{figure}[ht!]
\centering
\subfigure[]{\label{q133}
\begin{minipage}[b]{0.25\textwidth}
\includegraphics[width=4cm]{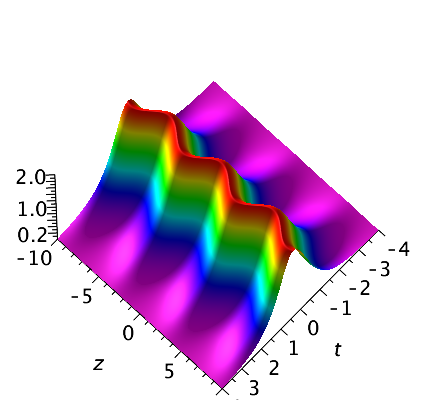}
\end{minipage}
}
\subfigure[]{\label{q1411}
\begin{minipage}[b]{0.25\textwidth}
\includegraphics[width=4cm]{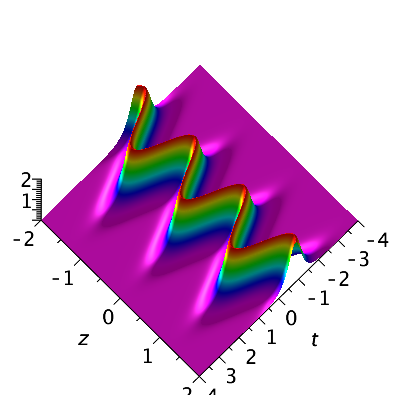}
\end{minipage}
}
\subfigure[]{\label{q135}
\begin{minipage}[b]{0.25\textwidth}
\includegraphics[width=4cm]{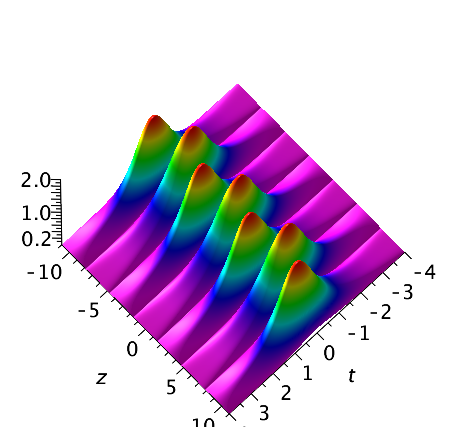}
\end{minipage}
}
\subfigure[]{\label{q1413}
\begin{minipage}[b]{0.25\textwidth}
\includegraphics[width=4cm]{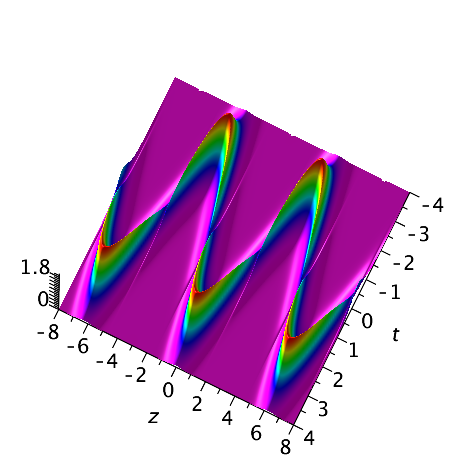}
\end{minipage}
}
\subfigure[]{\label{q1415}
\begin{minipage}[b]{0.25\textwidth}
\includegraphics[width=4cm]{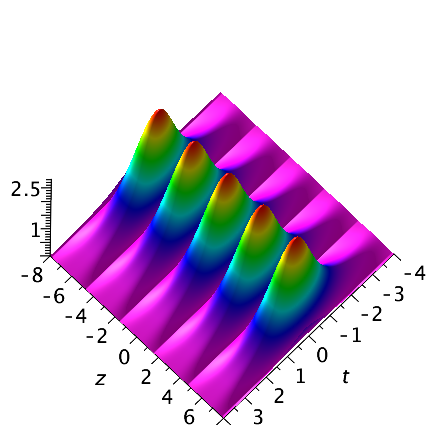}
\end{minipage}
}
\subfigure[]{\label{q1414}
\begin{minipage}[b]{0.25\textwidth}
\includegraphics[width=4cm]{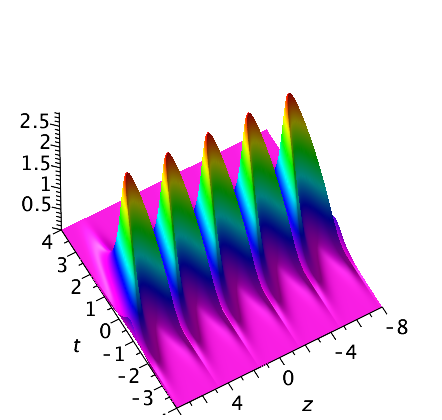}
\end{minipage}
}
\caption{  The evolution plot of 1-soliton solutions in the IVC-Hirota equation: (a) $\alpha_{1}(z)=\alpha_{4}(z)=sin(z)$ and $\delta=1$; 
(b) $\alpha_{1}(z)=\alpha_{4}(z)=sin(5z)$ and $\delta=6$; 
$\alpha_{1}(z)=sin(z)$ and $\alpha_{4}(z)=tan(z)$: ((c) $\delta = 1$; (d) $\delta = 4$;)
$\alpha_{1}(z)=sin(2z)$ and $\alpha_{4}(z)=tan(z)$: ((e)$\delta=1$; (f) $\delta = 3$.)}
\label{2z}
\end{figure}

Besides, fixed the coefficient $\alpha_{1}(z)=tanh(z)$, then we can get $|q_{1}|^{2}=4\left|\frac{tanh(z)}{\alpha_{4}}\right|sech^{2}(\frac{\sqrt{2}}{3}ln|cosh(z)|-\sqrt{2}t).$  Fig.\eqref{3z} shows the dynamic evolution process of the nonlinear term $\alpha_{4}$ and $\delta$ with different values.
\begin{figure}[ht!]
\centering
\subfigure[]{\label{q151}
\begin{minipage}[b]{0.25\textwidth}
\includegraphics[width=4.cm]{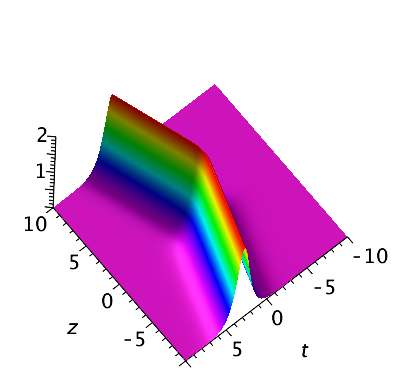}
\end{minipage}
}
\subfigure[]{\label{q15121}
\begin{minipage}[b]{0.25\textwidth}
\includegraphics[width=4.cm]{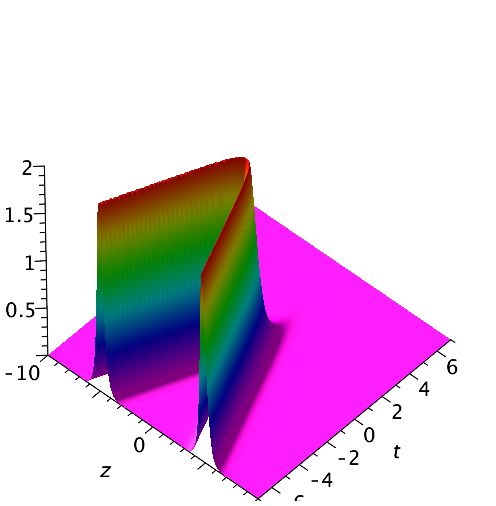}
\end{minipage}
}
\subfigure[]{\label{q1521}
\begin{minipage}[b]{0.25\textwidth}
\includegraphics[width=4.cm]{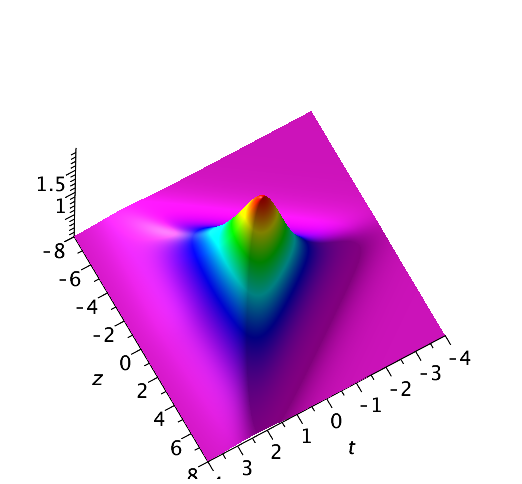}
\end{minipage}
}
\subfigure[]{\label{q1522}
\begin{minipage}[b]{0.25\textwidth}
\includegraphics[width=4.cm]{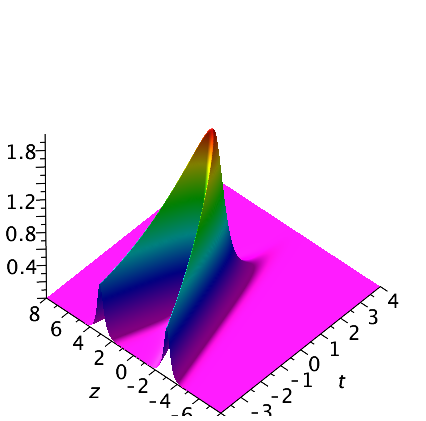}
\end{minipage}
}
\subfigure[]{\label{q153}
\begin{minipage}[b]{0.25\textwidth}
\includegraphics[width=4cm]{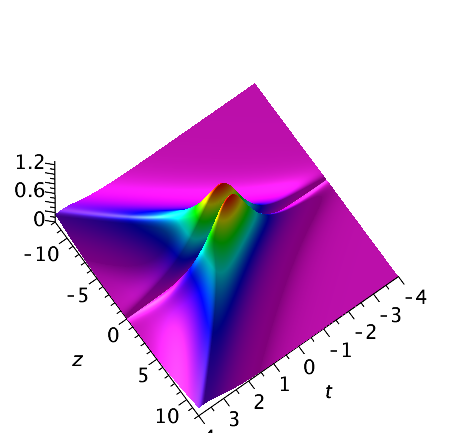}
\end{minipage}
}
\subfigure[]{\label{q1531}
\begin{minipage}[b]{0.25\textwidth}
\includegraphics[width=4cm]{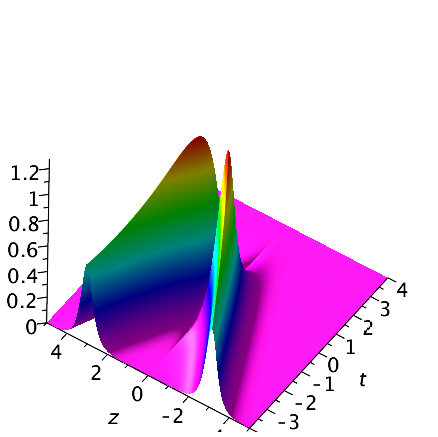}
\end{minipage}
}
\caption{ The evolution plot of 1-soliton solutions in the IVC-Hirota equation: (a) $\alpha_{4}(z)=tanh(z)$ and $\delta=1$; (b) $\alpha_{4}(z)=tanh(z)$ and $\delta=3$; (c) $\alpha_{4}(z)=sinh(z)$ and $\delta=1$; (d) $\alpha_{4}(z)=sinh(z)$ and $\delta=3$; (e) $\alpha_{4}(z)=z^{2}+1$ and $\delta=1$; (f) $\alpha_{4}(z)=z^{2}+1$ and $\delta=3$.}
\label{3z}
\end{figure}
 
When $N=2$, then the two-soliton solution $u_{2}(T,Z)$ of Hirota equation is expressed as follows:
\begin{equation}\label{2.1241}\footnotesize%
\frac{h_{1}\,{{\rm e}^{-\Theta_{1}+\Theta_{4}}}+{h_{2}}\,{{\rm e}^{\Theta_{3}-\Theta_{2}}}-{h_{3}}\,{{\rm e}^{\Theta_{1}+\Theta_{4}}}+{h_{4}}\,{{\rm e}^{\Theta_{3}+\Theta_{2}}}
}{{d_{1}}\,{{\rm e}^{-\Theta_{1}-\Theta_{2}}}+{d_{2}}\,{{\rm e}^{-\Theta_{1}+\Theta_{2}}}+{d_{3}}\,{{\rm e}^{\Theta_{1}+\Theta_{2}}}+{d_{4}}\,{{\rm e}^{\Theta_{3}-\Theta_{4}}}+{ d_{5}}\,{{\rm e}^{
\Theta_{1}-\Theta_{2}}}+{d_{6}}\,{
{\rm e}^{-\Theta_{3}+\Theta_{4}}}},
\end{equation}
where
$$\Theta_{1}=\theta_{1}+\theta_{1}^{*},$$
$$\Theta_{2}=\theta_{2}+\theta_{2}^{*},$$
$$\Theta_{3}=\theta_{1}-\theta_{1}^{*},$$
$$\Theta_{4}=\theta_{2}-\theta_{2}^{*},$$
$$d_{1}=(\zeta^{*}_{1}-\zeta^{*}_{2})({\zeta_{2}}-{\zeta_{1}}),$$$$d_{2}=|{c_{2}}|^{2}(\zeta^{*}_{1}-\zeta_{2})(\zeta^{*}_{2}-\zeta_{1}),$$
$$d_{3}=|{{c_{1}}}^{2}{{c_{2}}}^{2}|(\zeta^{*}_{2}-\zeta^{*}_{1})({\zeta_{1}}-{\zeta_{2}}),$$
$$d_{4}={c_{1}}{c_{2}}^{*}(\zeta^{*}_{1}-\zeta_{1})({\zeta_{2}}-{\zeta_{2}}^{*}),$$
$$d_{5}=|{c_{1}}|^{2}(\zeta^{*}_{2}-\zeta_{1})(\zeta_{1}^{*}-{\zeta_{2}}),$$
$$d_{6}=c_{1}^{*}{c_{2}}(\zeta^{*}_{2}-\zeta_{2})({\zeta_{1}}-\zeta_{1}^{*}),$$
$$h_{1}=-{c_{2}}\, ({\zeta_{2}}-\zeta^{*}_{1})(\zeta^{*}_{2}-\zeta^{*}_{1})(\zeta^{*}_{2}-{\zeta_{2}}),$$
$$h_{2}={c_{1}}\,({\zeta_{1}}-\zeta^{*}_{1})(\zeta^{*}_{2}-\zeta^{*}_{1})(\zeta^{*}_{2}-{\zeta_{1}}),$$
$$h_{3}=-c_{2}|{c_{1}}|^{2}(\zeta^{*}_{2}-{\zeta_{2}})(\zeta_{1}-\zeta_{2})({\zeta_{2}^{*}}-{\zeta_{1}}),$$
$$h_{4}={c_{1}}\,|{c_{2}}|^{2}({\zeta_{1}}-{\zeta_{2}})({\zeta_{2}}-\zeta^{*}_{1})({\zeta_{1}}-\zeta^{*}_{1}).
$$
We analyze the asymptotic states of the solution \eqref{2.1241} as $Z \rightarrow \pm \infty$ and  $(\alpha,\beta)$ is non-negative.  Without loss of generality, let $\zeta_{k}=\xi_{k}+i\eta_{k}$ and $|\xi_{1}|>|\xi_{2}|$, this means that at $Z=-\infty,$ soliton-$1$ is on the right side of soliton-$2$ and moves slower. Note also that $\eta_{k}>0$ and $\eta_{2}>\eta_{1}$, since $\zeta_{k} \in \mathbb{C}_{+}.$ In the moving frame with velocity $4\beta\eta_{1}^{2}-12\beta\xi_{1}^{2}-4\alpha\xi_{1}$, $\operatorname{Re}\left(\theta_{1}\right)=\eta_{1}(T-4\beta\eta_{1}^{2}Z+12\beta \xi_{1}^{2}Z+4\alpha \xi_{1}Z)=O(1)$.
It is a consequence of 
\begin{equation}\notag
\operatorname{Re}\left(\theta_{2}\right)=\eta_{2}(T-(4\beta\eta_{1}^{2}-12\beta \xi_{1}^{2}-4\alpha\xi_{1})Z)+4\eta_{2}(\beta(\eta_{1}^{2}-\eta_{2}^{2})+\alpha(\xi_{2}-\xi_{1})+3\beta(\xi_{2}^{2}-\xi_{1}^{2}))Z 
\end{equation}
that
\begin{equation}
u_{2}(T, Z) \rightarrow
\begin{cases}
2 i\left(\zeta_{1}^{*}-\zeta_{1}\right) \frac{c_{1}^{-} e^{\theta_{1}-\theta_{1}^{*}}}{e^{-\left(\theta_{1}+\theta_{1}^{*}\right)}+\left|c_{1}^{-}\right|^{2} e^{\theta_{1}+\theta_{1}^{*}}}, \quad Z \rightarrow-\infty,\\
\\
2 i\left(\zeta_{1}^{*}-\zeta_{1}\right) \frac{c_{1}^{+} e^{\theta_{1}-\theta_{1}^{*}}}{e^{-\left(\theta_{1}+\theta_{1}^{*}\right)}+\left|c_{1}^{+}\right|^{2} e^{\theta_{1}+\theta_{1}^{*}}}, \quad Z \rightarrow+\infty,
\end{cases}
\end{equation}
where $c_{1}^{-}= \frac{c_{1}\left(\zeta_{1}-\zeta_{2}\right)}{\left(\zeta_{1}-\zeta_{2}^{*}\right)}$, $c_{1}^{+}=\frac{c_{1}\left(\zeta_{1}-\zeta_{2}^{*}\right)}{\left(\zeta_{1}-\zeta_{2}\right)}$ and $u_{2}(T, Z) \rightarrow
\begin{cases}
+\infty, \quad Z \rightarrow-\infty,\\
\\
-\infty, \quad Z \rightarrow+\infty.
\end{cases}$ Comparing this expression with \eqref{2.1241}, we see that this asymptotic solution is a single-soliton solution with peak amplitude $2\eta_{1}$ and velocity $4\beta\eta_{1}^{2}-12\beta\xi_{1}^{2}-4\alpha\xi_{1}$. Thus, this soliton does not change its shape and velocity after collision. Its phase has shifted and the phase difference for $u_{2}$ at its limits is $\arg \left(u_{2}\left(Z \sim-\infty\right)\right)-\arg \left(u_{2}\left(Z \sim+\infty\right)\right)$. It is apparent from the above analysis that the values of $(\alpha,\beta)$ influence the velocity, phase of the soliton. 
 
Letting  $\zeta_{1}= 0.1+0.7i$ and $\zeta_{2}=-0.1+0.4i$, when $(\alpha, \beta)$ is set as $(0,1)$, $(1,1)$ and $(1,0)$ respectively, their corresponding dynamic evolution diagrams can be drawn in Figs.\eqref{u21}, \eqref{u22} and \eqref{u23}. In particular, when $\alpha=0$ and $4\beta \eta_{1}^{2}-12\beta\xi_{1}^{2}$=$4\beta \eta_{2}^{2}-12\beta\xi_{2}^{2}$, resonance solitons can be obtained. Taking $\zeta_{1}=1+\sqrt{3}i$ and $\zeta_{2} =2+2\sqrt{3}i$, the resonance solitons solution is shown in  Fig.\eqref{gz2gz}.
\begin{figure}[ht!]
\centering
\subfigure[]{\label{u21}
\begin{minipage}[b]{0.2\textwidth}
\includegraphics[width=4cm]{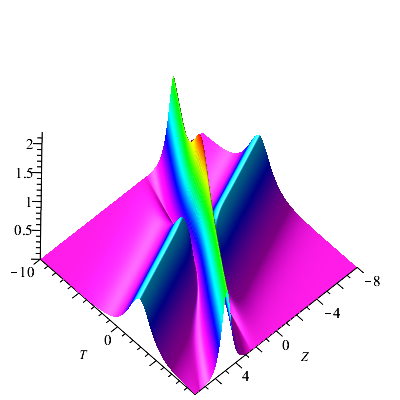} \\
\end{minipage}
}
\subfigure[]{\label{u22}
\begin{minipage}[b]{0.2\textwidth}
\includegraphics[width=4cm]{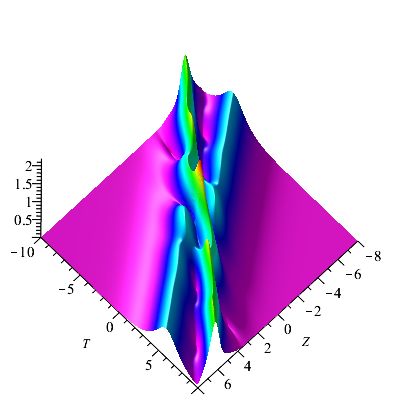} \\
\end{minipage}
}
\subfigure[]{\label{u23}
\begin{minipage}[b]{0.2\textwidth}
\includegraphics[width=4cm]{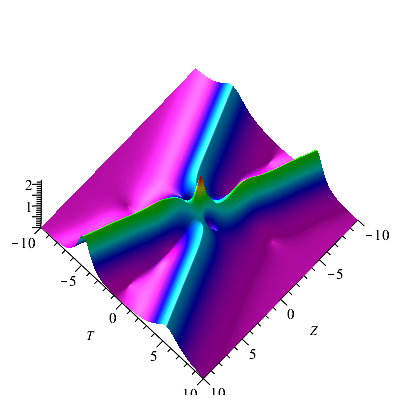}\\
\end{minipage}
}
\subfigure[]{\label{gz2gz}
\begin{minipage}[b]{0.2\textwidth}
\includegraphics[width=4.0cm]{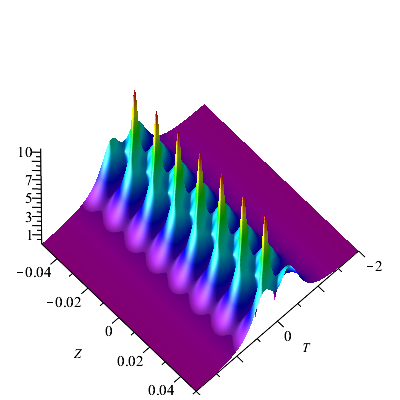}\\
\end{minipage}
}
\caption{ The evolution plot of 2-soliton solutions in the Hirota equation: $\zeta_{1}= 0.1+0.7i$ and $\zeta_{2}=-0.1+0.4i$: ((a) $\alpha=0$ and $\beta=1$; (b) $\alpha=1$ and $\beta=1$; (c) $\alpha=1$ and $\beta=0$.) (d) $\zeta_{1}=1+\sqrt{3}i$, $\zeta_{2} =2+2\sqrt{3}i$, $\alpha=0$ and $\beta=1$.}
\label{11cot(z)}
\end{figure}  

A simpler exact expression of the 2-soliton solution for the IVC-Hirota equation can be acquired by taking $\alpha=0$, $c_{1}=c_{2}=\delta=1$, $\zeta_{1}= i$ and $\zeta_{2}=2i$. Using transformation relationship \eqref{tr},  the expression $q_{2}(t,z)$ is shown as follows: 
\begin{equation}
\label{q2}
q_{2}(t,z)=A_{11}\frac{(e^{\frac{\sqrt{2}}{3}(3t-31f)}-2e^{\frac{2\sqrt{2}}{3}(3t-7f)}-2e^{\frac{4\sqrt{2}}{3}(3t-f)}+e^{\frac{\sqrt{2}}{3}(13f+15t)})}
{16e^{3\sqrt{2}(t-f)}-e^{-12\sqrt{2}f}-9e^{\frac{2\sqrt{2}}{3}(3t-13f)}-9e^{\frac{4\sqrt{2}}{3}(2f+3t)}-e^{6\sqrt{2}(f+t)}},
\end{equation}
where$
f=\int\alpha_{1}(z)dz and
A_{11}=12\sqrt{\frac{\alpha_{1}(z)}{\alpha_{4}(z)}}e^{-\frac{1}{3}i(2f+3t)}.$
From expression \eqref{q2}, we can clearly find that the value of $\beta$ has no effect on the solution $q_{2}$ when $\alpha=0$. Except for the case where both spectral parameters are purely imaginary, let us consider the more general case when $\zeta_{1}= 1+i$, $\zeta_{2}=2+2i$ in the following.

Similar to the case of the one-soliton solution $q_{1}$, the dynamic evolution diagram of $q_{2}$ is symmetric about  $z$ axis when  $\alpha_{1}$ and $\alpha_{4}$ are odd numbers. It can  also be seen that the value of $\delta$ has a great influence on the propagation path of the solution. Using the relationship $2\,{\xi}^{2}-\frac{2}{3}\,{\eta
}^{2}-\delta+\frac{2\alpha\xi}{3\beta}+\frac{{\alpha
}^{2}}{18{\beta}^{2}}=0$, we can find that the dynamic behaviour of the soliton-1 of the solution $q_{2}$ is similar to the corresponding soliton-1 solution of constant coefficient equation when $\delta=\frac{4}{3}$. For the soliton-2, the dynamic behaviour will be similar to the solution of constant coefficient equation when  $\delta=\frac{16}{3}$.  

When we take $\alpha_{1}(z)=\alpha_{4}(z)=z$, the $2$-soliton solution is a constant amplitude solution and the amplitude of Soliton-$1$ is $2$ and the amplitude of Soliton-$2$ is $4$ in the $2$-soliton. At the intersection of solitons-$1$ and solitons-$2$, the amplitude is superimposed linear. The value of $\delta$ will affect the velocity and direction of the $2$-soliton solution. For soliton-$1$ in the $2$-soliton, when $\delta<\frac{4}{3}$, the soliton-$1$ evolves in the region of $t\geq 0$; when $\delta=\frac{4}{3}$, the shape of soliton-$1$ is similar to the bell shape soliton and propagates along $t=0$; otherwise, the soliton-1 evolves in the region $t\leq 0$. For soliton-2 in the 2-soliton, when $\delta<\frac{16}{3}$, the soliton-2 evolves in the region $t\geq 0$; when $\delta=\frac{16}{3}$, soliton-2 propagates along $t=0$; or else, the soliton-2 evolves in the region $t\leq 0$. Figs.\eqref{q211},\eqref{q212},\eqref{q213},\eqref{q214} and \eqref{q215} illustrate the dynamic behaviour of the 2-solitons solution when $\delta$ taking the value of $1$, $\frac{4}{3}$, $2$, $\frac{16}{3}$ and $6$ respectively. When $\alpha_{1}(z)=z$, $\alpha_{4}(z) = z^{2}+1$, the 3-D plots for the 2-soliton solutions are shown in Figs.\eqref{q221},\eqref{q222},\eqref{q223},\eqref{q224} and \eqref{q225}. As can be seen in Figs.\eqref{q231}, \eqref{q232}, \eqref{q233}, \eqref{q234} and \eqref{q235}, a strong interaction occurs when the soliton collides when taking $\alpha_{1}=\alpha_{4}=z^{2}$ in solution $q_{2}$. At the intersection of two $1$-soliton solutions, a linear superposition of amplitudes appears. The $2$-soliton solutions with $\delta = \dfrac{4}{3}$ and $\delta = \dfrac{16}{3}$ are similar to the $2$-soliton solutions of constant coefficients equation. 

\begin{figure}[ht!]
\centering
\subfigure[]{\label{q211}
\begin{minipage}[b]{0.18\textwidth}
\includegraphics[width=3.2cm]{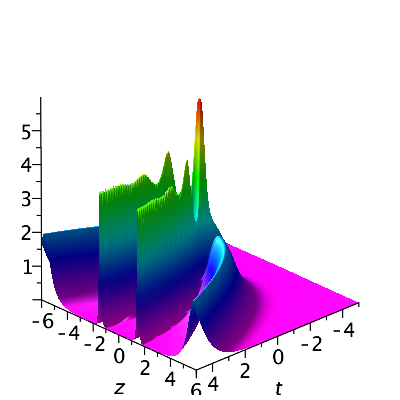}
\end{minipage}
}
\subfigure[]{\label{q212}
\begin{minipage}[b]{0.18\textwidth}
\includegraphics[width=3.2cm]{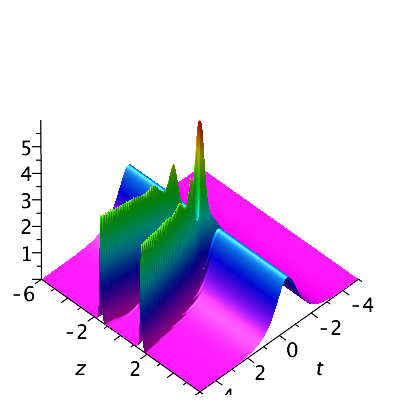}
\end{minipage}
}
\subfigure[]{\label{q213}
\begin{minipage}[b]{0.18\textwidth}
\includegraphics[width=3.2cm]{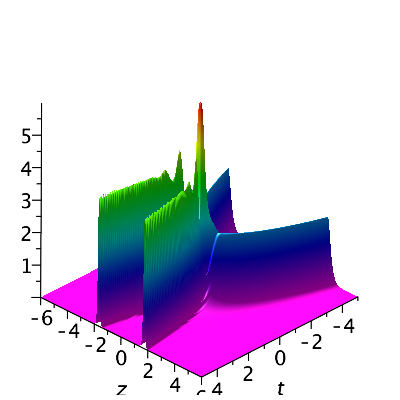}
\end{minipage}
}
\subfigure[]{\label{q214}
\begin{minipage}[b]{0.18\textwidth}
\includegraphics[width=3.2cm]{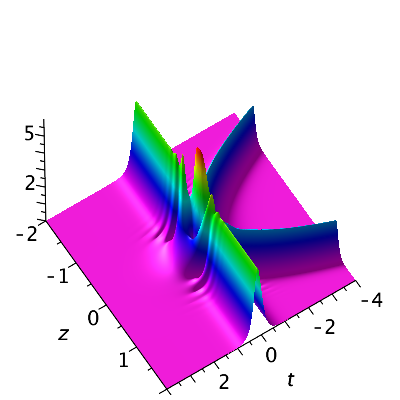}
\end{minipage}
}
\subfigure[]{\label{q215}
\begin{minipage}[b]{0.18\textwidth}
\includegraphics[width=3.2cm]{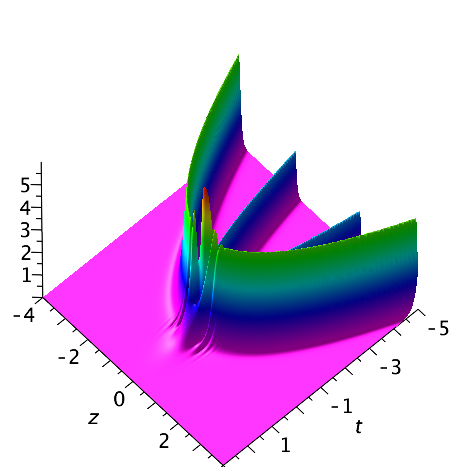}
\end{minipage}
}
\subfigure[]{\label{q221}
\begin{minipage}[b]{0.18\textwidth}
\includegraphics[width=3.2cm]{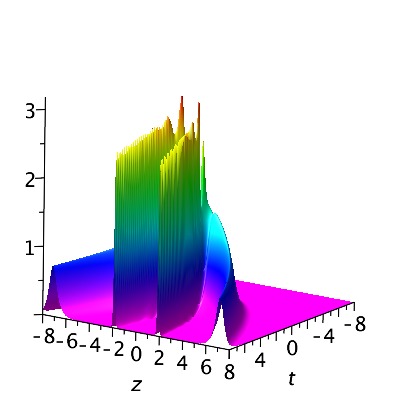}
\end{minipage}
}
\subfigure[]{\label{q222}
\begin{minipage}[b]{0.15\textwidth}
\includegraphics[width=3cm]{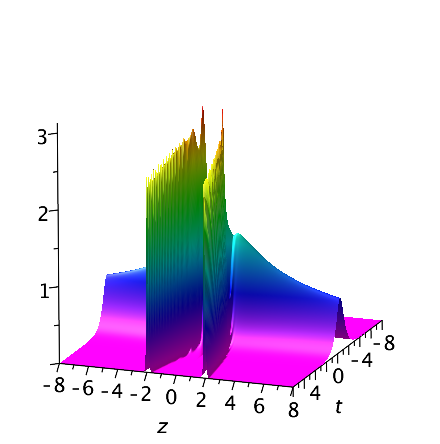}
\end{minipage}
}
\subfigure[]{\label{q223}
\begin{minipage}[b]{0.18\textwidth}
\includegraphics[width=3.2cm]{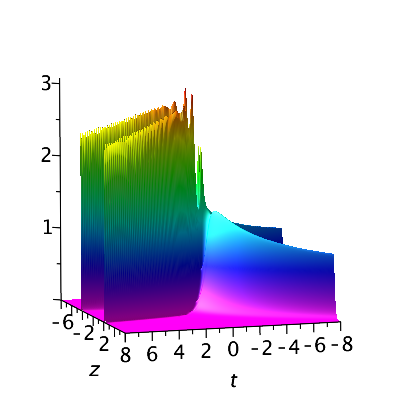}
\end{minipage}
}
\subfigure[]{\label{q224}
\begin{minipage}[b]{0.18\textwidth}
\includegraphics[width=3.2cm]{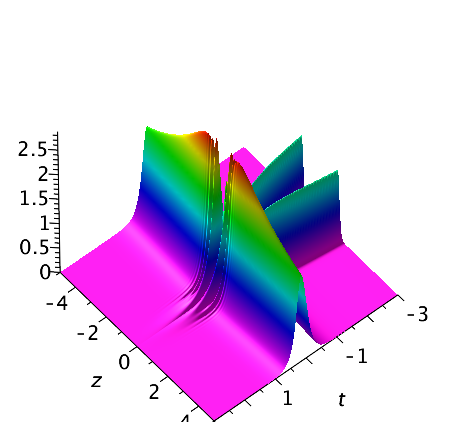}
\end{minipage}
}
\subfigure[]{\label{q225}
\begin{minipage}[b]{0.18\textwidth}
\includegraphics[width=3.2cm]{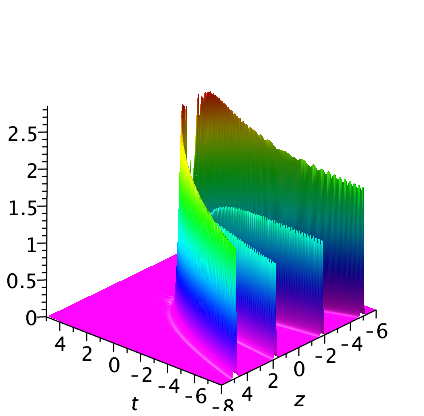}
\end{minipage}
}
\subfigure[]{\label{q231}
\begin{minipage}[b]{0.18\textwidth}
\includegraphics[width=3.2cm]{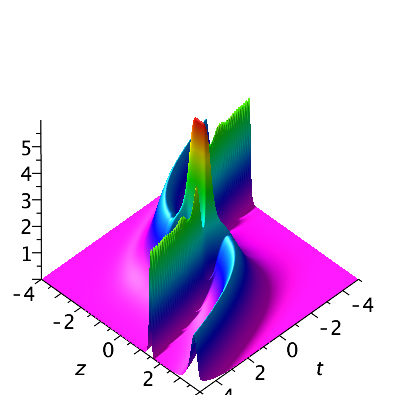}
\end{minipage}
}
\subfigure[]{\label{q232}
\begin{minipage}[b]{0.18\textwidth}
\includegraphics[width=3.2cm]{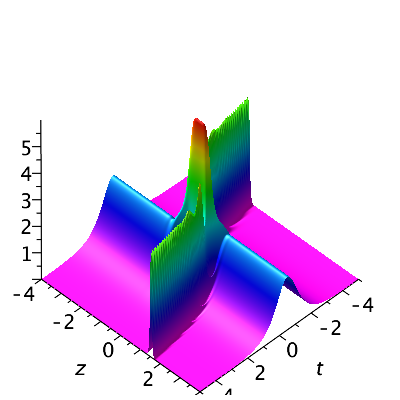}
\end{minipage}
}
\subfigure[]{\label{q233}
\begin{minipage}[b]{0.18\textwidth}
\includegraphics[width=3.2cm]{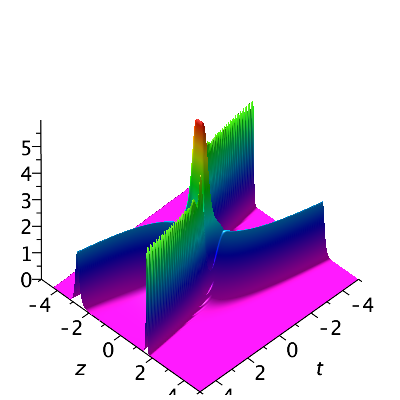}
\end{minipage}
}
\subfigure[]{\label{q234}
\begin{minipage}[b]{0.18\textwidth}
\includegraphics[width=3.2cm]{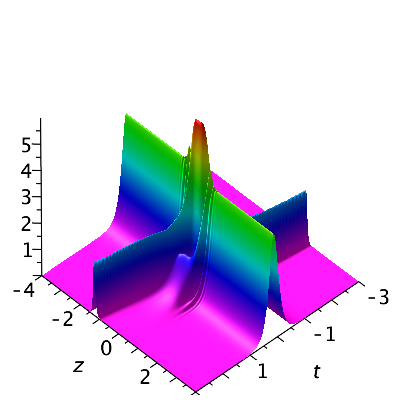}
\end{minipage}
}
\subfigure[]{\label{q235}
\begin{minipage}[b]{0.18\textwidth}
\includegraphics[width=3.2cm]{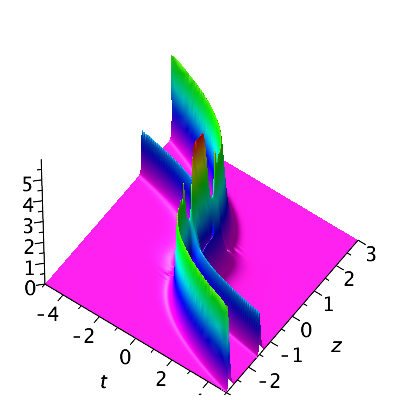}
\end{minipage}
}
\caption{ The evolution plot of 2-soliton solutions in the IVC-Hirota equation: $\alpha_{1}(z)=\alpha_{4}(z)=z$: ((a) $\delta = 1$; (b) $\delta = \frac{4}{3}$; (c) $\delta = 2$; (d) $\delta = \frac{16}{3}$; (e) $\delta = 6$.) 
$\alpha_{1}(z)=z$ and $\alpha_{4}(z) = z^{2}+1$: ((f) $\delta = 1$; (g) $\delta = \frac{4}{3}$; (h) $\delta = 2$; (i) $\delta = \frac{16}{3}$; (j) $\delta = 6$.)
$\alpha_{1}=\alpha_{4}=z^{2}$: ((k) $\delta = 1$; (l) $\delta = \frac{4}{3}$; (m) $\delta = 2$; (n) $\delta = \frac{16}{3}$; (o) $\delta = 6$.)}
\label{11cot(z)}
\end{figure}  

Taking  $\alpha_{1}(z)=sin(kz)$ as excitation function, the parameter values of $k$ and $\delta$ have great influence on the shape of the soliton solutions. The dynamic evolution diagram of different parameters can be seen in 
Fig.\eqref{q2sin}.
\begin{figure}[ht!]
\centering
\subfigure[]{\label{q261}
\begin{minipage}[b]{0.3\textwidth}
\includegraphics[width=4cm]{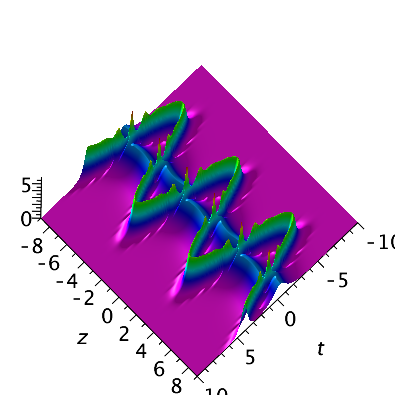} 
\end{minipage}
}
\subfigure[]{\label{q262}
\begin{minipage}[b]{0.3\textwidth}
\includegraphics[width=4cm]{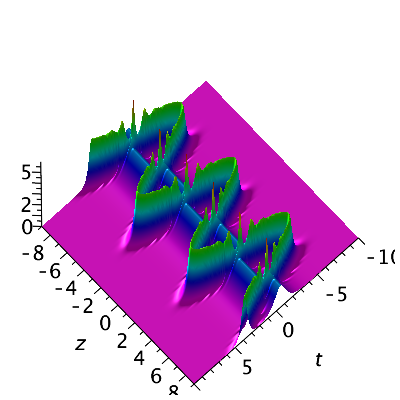}
\end{minipage}
}
\subfigure[]{\label{q263}
\begin{minipage}[b]{0.3\textwidth}
\includegraphics[width=4cm]{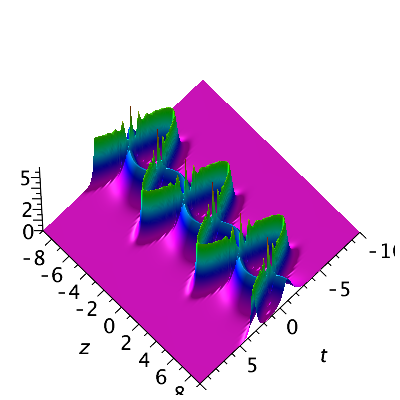}
\end{minipage}
}
\subfigure[]{\label{q271}
\begin{minipage}[b]{0.3\textwidth}
\includegraphics[width=4cm]{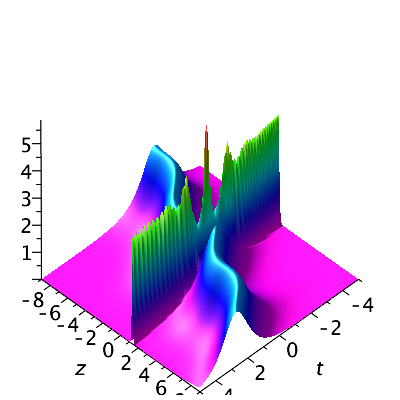} 
\end{minipage}
}
\subfigure[]{\label{q272}
\begin{minipage}[b]{0.3\textwidth}
\includegraphics[width=4cm]{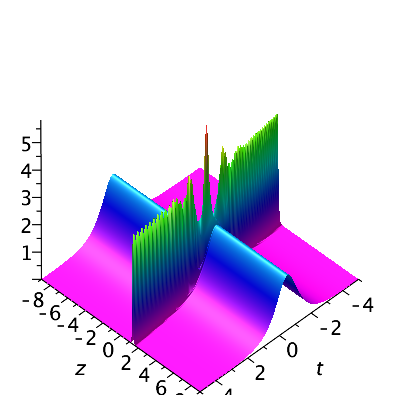}
\end{minipage}
}
\subfigure[]{\label{q273}
\begin{minipage}[b]{0.3\textwidth}
\includegraphics[width=4cm]{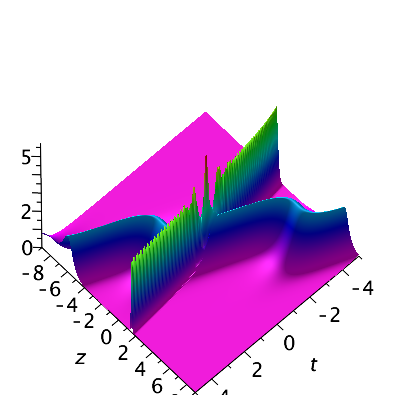}
\end{minipage}
}
\subfigure[]{\label{q281}
\begin{minipage}[b]{0.3\textwidth}
\includegraphics[width=4cm]{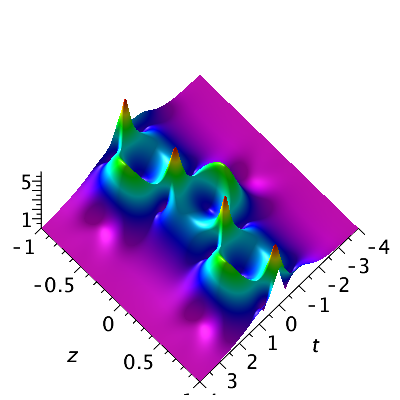}
\end{minipage}
}
\subfigure[]{\label{q291}
\begin{minipage}[b]{0.3\textwidth}
\includegraphics[width=4cm]{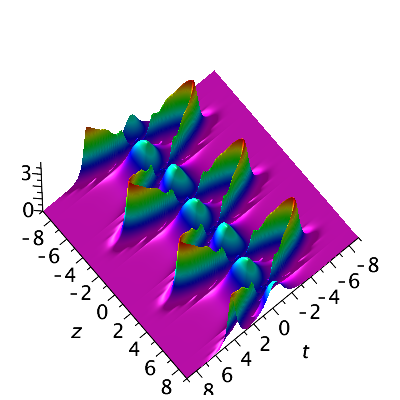}
\end{minipage}
}
\subfigure[]{\label{q2101}
\begin{minipage}[b]{0.3\textwidth}
\includegraphics[width=4cm]{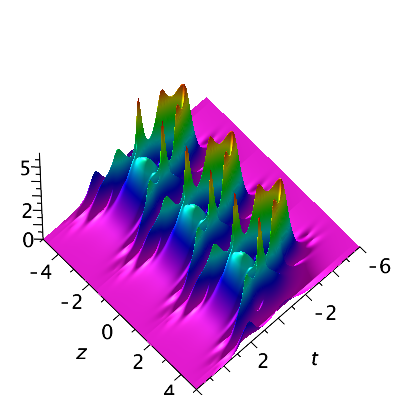}
\end{minipage}
}
\caption{ The evolution plot of 2-soliton solutions in the IVC-Hirota equation: $\alpha_{1}=\alpha_{4}=sin(z)$: ((a) $\delta = 1$; (b) $\delta = \frac{4}{3}$; (c) $\delta = 2$.)  
$\alpha_{1}=\alpha_{4}=1+sin(z)$: ((d) $\delta = 1$; (e) $\delta = \frac{4}{3}$; (f) $\delta = 2$.) (g) $\alpha_{1}=\alpha_{4}=sin(5z)$ and $\delta = 1$; $\alpha_{4}(z) = tan(z)$ and $\delta = 1$: ((h) $\alpha_{1}(z) = sin(z)$; (i)$\alpha_{1}(z) = sin(2z)$.)}
\label{q2sin}
\end{figure}  

When we let $\alpha_{1}(z)=tanh(z)$, Fig.\eqref{q2tanh} shows the dynamic evolution process of the nonlinear term 
$\alpha_{4}$ and $\delta$ with different values. 
\begin{figure}[ht!]
\centering
\subfigure[]{\label{q2111}
\begin{minipage}[b]{0.3\textwidth}
\includegraphics[width=4cm]{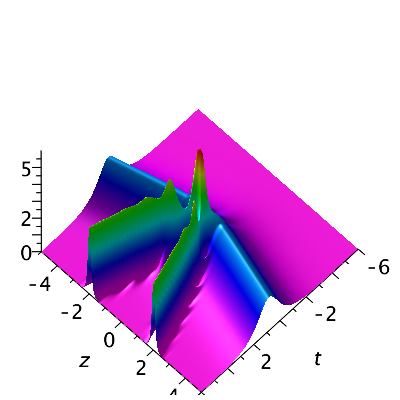} 
\end{minipage}
}
\subfigure[]{\label{q2112}
\begin{minipage}[b]{0.3\textwidth}
\includegraphics[width=4cm]{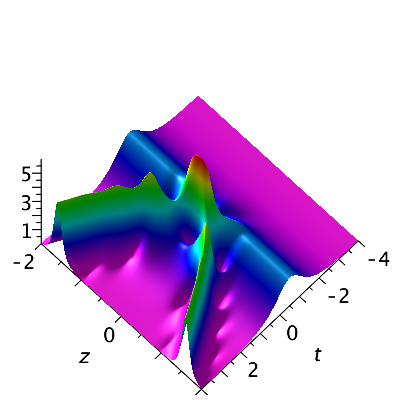}
\end{minipage}
}
\subfigure[]{\label{q2113}
\begin{minipage}[b]{0.3\textwidth}
\includegraphics[width=4cm]{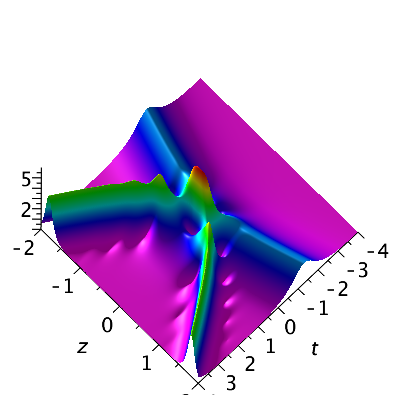}
\end{minipage}
}
\subfigure[]{\label{q2121}
\begin{minipage}[b]{0.3\textwidth}
\includegraphics[width=4cm]{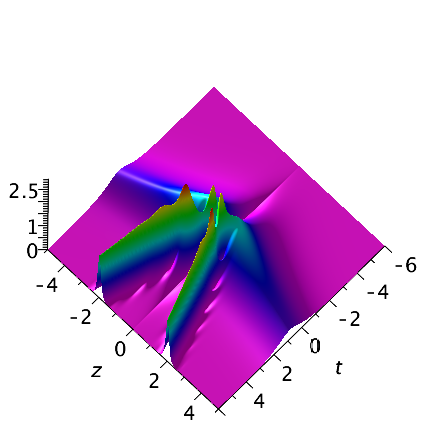} 
\end{minipage}
}
\subfigure[]{\label{q2122}
\begin{minipage}[b]{0.3\textwidth}
\includegraphics[width=4cm]{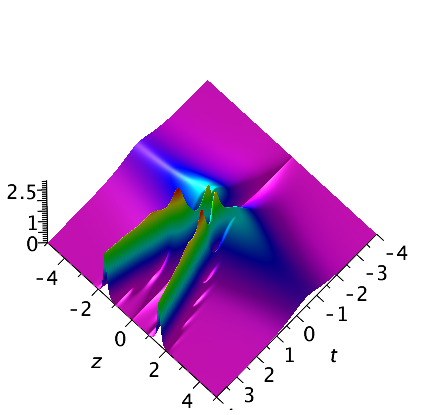}
\end{minipage}
}
\subfigure[]{\label{q2123}
\begin{minipage}[b]{0.3\textwidth}
\includegraphics[width=4cm]{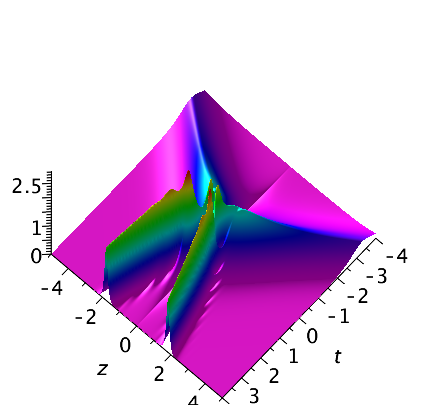}
\end{minipage}
}
\caption{ The evolution plot of 2-soliton solutions in the IVC-Hirota equation: $\alpha_{1}(z) =\alpha_{4}(z) = tanh(z)$: ((a) $\delta = 1$; (b) $\delta = \frac{4}{3}$; (c) $\delta = 2$.) $\alpha_{1}(z) = tanh(z)$ and $\alpha_{4}(z) = z^{2}+1$: ((d) $\delta = 1$; (e) $\delta = \frac{4}{3}$; (f) $\delta = 2$.)}
\label{q2tanh}
\end{figure}  

\section{Soliton matrices for high-order zeros}

We now turn to the high-order zeros in the RH problem of the Hirota equation. For simply, we let functions $P^{+}(\zeta)$ and $P^{-}(\zeta)$ from the above RH problem have only one n-order zero, i.e. $|P^{+}(\zeta)|= (\zeta-\zeta_{1})^{n} \varphi(\zeta)$,
$|P^{-}(\zeta)|=(\zeta-\bar{\zeta}_{1})^{n}\bar{\varphi}(\zeta)$, where $\varphi(\zeta_{ 1} )\neq 0 $ and $\bar{\varphi}(\bar{\zeta}_{ 1} )\neq 0$.

With the help of the idea proposed in \cite{sh2003}, we can consider the elementary zero case under the assumption that the geometric multiplicity of $k_{1}$ and $\bar{\zeta}_{1}$ has the same number. Hence, we need to construct the dressing matrix $\Gamma(\zeta)$ with determinant is  $\frac{(\zeta-\zeta_{1})^{n}}{(\zeta-\bar{\zeta}_{1})^{n}}$. For example, we first consider the elementary zeros with geometric multiplicity 1. In this case, $\Gamma$ is constituted of $n$ elementary dressing factors, i.e.: $\Gamma=\chi_{n}\chi_{n-1} \ldots \chi_{1}, $ where
\[
\begin{array}{l}
\chi_{i}(\zeta)=I+\frac{\bar{\zeta}_{1}-\zeta_{1}}{\zeta-\bar{\zeta}_{1}} P_{i}, P_{i}=\frac{|v_{i}\rangle\langle\bar{v}_{i}|}{\langle\bar{v}_{i} | v_{i}\rangle},|v_{i}\rangle \in \operatorname{Ker}(P_{+} \chi_{1}^{-1} \cdots \chi_{i-1}^{-1}(\zeta_{1})) \\
\end{array}.
\]
In addition, if we let $\hat{P}^{+}(\zeta)=P^{+}(\zeta) \chi_{1}^{-1}(\zeta)$ and $\hat{P}^{-}(\zeta)=\chi_{1}(\zeta) P^{-}(\zeta),$ then it is proved that matrices
$\hat{P}^{+}(\zeta)$ and $\hat{P}^{-}(\zeta)$ are still holomorphic in the respective half plans of $\mathbb{C}$. Moreover, $\zeta_{1}$ and $\bar{\zeta}_{1}$ are still a pair of zeros of $|\hat{P}^{+}(\zeta)|$ and $|\hat{P}^{-}(\zeta)|$, respectively. Thus, $\Gamma(\zeta)^{-1}$ cancels all the high-order zeros for $|P^{+}(\zeta)|$. Moreover, it is necessary to reformulate the dressing factor into summation of fractions, then we derive the soliton matrix $\Gamma(\zeta)$ and its inverse for a pair of an elementary high-order zero. The results can be formulated in the following lemma.

\begin{lemma}
 Consider a pair of elementary high-order zeros of order $n:\left\{\zeta_{1}\right\}$ in $\mathbb{C}_{+}$ and $\left\{\bar{\zeta}_{1}\right\}$ in
$\mathbb{C}_{-}$. Then the corresponding soliton matrix and its inverse can be cast in the following form:
\[\label{s51}
\begin{array}{c}
\Gamma^{-1}(\zeta)=I+\left(\left|p_{1}\right\rangle, \cdots,\left|p_{n}\right\rangle\right) \mathcal{D}(\zeta)\left(\begin{array}{c}
\left\langle q_{n}\right| \\
\vdots \\
\left\langle q_{1}\right|
\end{array}\right), \\
\Gamma(\zeta)=I+\left(\left|\bar{q}_{n}\right\rangle, \cdots,\left|\bar{q}_{1}\right\rangle\right) \bar{\mathcal{D}}(\zeta)\left(\begin{array}{c}
\left\langle\bar{p}_{1}\right| \\
\vdots \\
\left\langle\bar{p}_{n}\right|
\end{array}\right),
\end{array}
\]
where $\mathcal{D}(\zeta)$ and $\bar{\mathcal{D}}(\zeta)$ are $n \times n$ block matrices,
\[
\begin{array}{c}
\mathcal{D}(\zeta)=\left(\begin{array}{ccccc}
(\zeta-\zeta_{1})^{-1} & (\zeta-\zeta_{1})^{-2} & \cdots & (\zeta-\zeta_{1})^{-n} \\
0 & \ddots & \ddots & \vdots \\
\vdots & \ddots & (\zeta-\zeta_{1})^{-1} & (\zeta-\zeta_{1})^{-2} \\
0 & \cdots & 0 & (\zeta-\zeta_{1})^{-1}
\end{array}\right), \quad  \\ \bar{\mathcal{D}}(\zeta)=\left(\begin{array}{cccc}
(\zeta-\zeta_{1})^{-1} & 0 & \cdots & 0 \\
(\zeta-\zeta_{1})^{-2} & (\zeta-\zeta_{1})^{-1} & \ddots & \vdots \\
\vdots & \ddots & \ddots & 0 \\
(\zeta-\zeta_{1})^{-n} & \cdots & (\zeta-\zeta_{1})^{-2} & (\zeta-\zeta_{1})^{-1}
\end{array}\right).
\end{array}
\]
\end{lemma}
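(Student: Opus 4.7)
The plan is to establish the lemma by induction on the order $n$ of the zero, using the explicit multiplicative decomposition $\Gamma^{-1}(\zeta) = \chi_1^{-1}(\zeta)\chi_2^{-1}(\zeta)\cdots \chi_n^{-1}(\zeta)$, where each factor $\chi_i^{-1}(\zeta) = I + \frac{\zeta_1-\bar{\zeta}_1}{\zeta-\zeta_1}P_i$ is a rank-one perturbation of the identity with a simple pole at $\zeta_1$. The base case $n=1$ is immediate: the single factor already fits the stated form once $|p_1\rangle$ and $\langle q_1|$ are identified, up to normalization, with multiples of $|v_1\rangle$ and $\langle\bar v_1|$.

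For the inductive step, I would assume the formula holds for order $n-1$ with vectors $\{|p_i\rangle,\langle q_i|\}_{i=1}^{n-1}$ and $(n-1)\times(n-1)$ block matrix $\mathcal{D}_{n-1}(\zeta)$, and then multiply on the right by $\chi_n^{-1}(\zeta)$. Expanding the product, one uses the identity $\frac{1}{(\zeta-\zeta_1)^k}\cdot\frac{1}{\zeta-\zeta_1}=\frac{1}{(\zeta-\zeta_1)^{k+1}}$, which raises the maximum pole order by exactly one. Grouping the resulting terms according to the power of $(\zeta-\zeta_1)^{-1}$ produces the rightmost column of the enlarged $\mathcal{D}_n(\zeta)$ (carrying poles of every order up to $n$) while leaving all sub-diagonal entries zero, exactly matching the upper-triangular Jordan pattern in the statement. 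The new vectors $|p_n\rangle,\langle q_n|$ are then read off from the leading residue contribution produced by $\chi_n^{-1}$.

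The parallel statement for $\Gamma(\zeta)$ follows by the same argument applied to the product $\Gamma(\zeta) = \chi_n(\zeta)\cdots\chi_1(\zeta)$, whose factors have simple poles at $\bar{\zeta}_1$. Ordering the factors in the opposite sense produces the transposed (lower-triangular) structure $\bar{\mathcal{D}}(\zeta)$ and identifies the vectors $|\bar q_i\rangle,\langle\bar p_i|$. As a cross-check, one can verify $\Gamma(\zeta)\Gamma^{-1}(\zeta)=I$ by matching the principal parts at $\zeta_1$ and $\bar{\zeta}_1$ and using the biorthogonality that the kernel conditions impose on the $\{|v_i\rangle\}$.

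The main obstacle is bookkeeping: each $\chi_i^{-1}$ contributes a rank-one factor whose vectors depend on all earlier $P_j$ through the recursive kernel condition $|v_i\rangle \in \mathrm{Ker}\bigl(P_+\chi_1^{-1}\cdots\chi_{i-1}^{-1}(\zeta_1)\bigr)$, so the inductive identification of $(|p_i\rangle,\langle q_i|)$ is not simply proportional to $(|v_i\rangle,\langle\bar v_i|)$ but involves an upper-triangular linear combination of them. Verifying that this combination is exactly what makes the strictly lower-triangular part of $\mathcal{D}(\zeta)$ vanish is the delicate step; once it is established, the matrix form in the lemma drops out directly.
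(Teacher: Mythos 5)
Your proposal is correct and follows essentially the same route as the paper, which itself disposes of this lemma in one line by asserting it ``can be proved by induction as in [Shchesnovich--Yang]'' --- i.e.\ induction on the number $n$ of elementary dressing factors $\chi_i$, exactly the decomposition $\Gamma^{-1}=\chi_1^{-1}\cdots\chi_n^{-1}$ you expand and regroup by pole order. Your sketch is in fact more explicit than the paper about the one delicate point (that the recursively defined kernel vectors force the sub-diagonal of $\mathcal{D}(\zeta)$ to vanish), which is precisely the bookkeeping carried out in the cited reference.
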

This lemma can be proved by induction as in \cite{sh2003}. Besides, we notice that in the expressions for $\Gamma^{-1}(\zeta)$ and $\Gamma(\zeta),$ only half of the vector parameters, i.e.: $\left|p_{1}\right\rangle, \cdots,\left|p_{n}\right\rangle$
 and  $\left\langle \bar{p}_{1}\right|,  \cdots,\left\langle \bar{p}_{n}\right|$ are independent. In fact, the rest of the vector parameters in \eqref{s51} can be derived by calculating the poles of each order in the identity $\Gamma(\zeta) \Gamma^{-1}(\zeta)=I$ at $\zeta=\zeta_{1}$
\[
\Gamma\left(\zeta_{1}\right)\left(\begin{array}{c}
\left|p_{1}\right\rangle \\
\vdots \\
\left|p_{n}\right\rangle
\end{array}\right)=0,
\]
where
\[
\Gamma(\zeta)=\left(\begin{array}{cccc}
\Gamma(\zeta) & 0 & \cdots & 0 \\
\frac{d}{d \zeta} \Gamma(\zeta) & \Gamma(\zeta) & \ddots & \vdots \\
\vdots & \ddots & \ddots & 0 \\
\frac{1}{(n-1) !} \frac{d^{n-1}}{d \zeta^{n-1}} \Gamma(\zeta) & \cdots & \frac{d}{d \zeta} \Gamma(\zeta) & \Gamma(\zeta)
\end{array}\right).
\]
Hence, in terms of the independent vector parameters, results \eqref{s51} can be formulated in a more compact form as in \cite{sh2003} and here we just avoid these overlapped parts. In the following, we derive this compact formula via the method of generalized Darboux transformation (gDT) \cite{llm2}. We intend to investigate the relation between dressing matrices and DT for Hirota equation in the high-order zero case. The essence of the DT is a gauge transformation. Following the scheme proposed in \cite{llm}, we can construct the gDT for Hirota equation as well.

Based on the form of elementary DT\cite{ss}, we can notice $\Gamma_{1}\left(\zeta_{1}+\epsilon\right)\left|v_{1}\left(\zeta_{1}+\epsilon\right)\right\rangle=0.$ 
Furthermore, consider a limitation as follows:
\[
\left|\chi_{1}^{|1|}\left(\zeta_{1}\right)\right\rangle \triangleq \lim _{\epsilon \rightarrow 0} \frac{\Gamma_{1}\left(\zeta_{1}+\epsilon\right)\left|\chi_{1}^{|0|}\left(\zeta_{1}+\epsilon\right)\right\rangle}{\epsilon}=\frac{d}{d \zeta}\left[\Gamma_{1}(\zeta)\left|\chi_{1}^{[0]}(\zeta)\right\rangle\right]_{\zeta=\zeta_{1}},
\]
where $\left|\chi_{1}^{[0]}\left(\zeta_{1}\right)\right\rangle=\left|v_{1}\left(\zeta_{1}\right)\right\rangle.$ Then $\left|\chi_{1}^{(1)}\right\rangle$ can be used to construct the next step DT, i.e.:
\[\Gamma_{1}^{[1]}(\zeta)=\left(I+\frac{\bar{\zeta}_{1}-\zeta_{1}}{\zeta-\bar{\zeta}_{1}} P_{1}^{[1]}\right), \quad P_{1}^{[1]}=\frac{\left|\chi_{1}^{[1]}\right\rangle\left\langle\chi_{1}^{[1]}\right|}{\left\langle\chi_{1}^{[1]} | \chi_{1}^{[1]}\right\rangle}.
\]
The result can be obtained as follows by continuing the above process:
\[
\left|\chi_{1}^{[N]}\right\rangle=\lim _{\epsilon \rightarrow 0} \frac{\Gamma_{1}^{[N-1]} \ldots \Gamma_{1}^{[1]}\Gamma_{1}^{[0]}\left(\zeta_{1}+\epsilon\right)\left|\chi_{1}^{[0]}\left(\zeta_{1}+\epsilon\right)\right\rangle}{\epsilon^{N}}.
\]
The N-times generalized Darboux matrix can be represented as:
\[
T_{N}(\zeta)=\Gamma_{1}^{[N-1]} \ldots \Gamma_{1}^{[1]} \Gamma_{1}^{[0]}(\zeta),
\]
where
\[\Gamma_{1}^{[i]}(\zeta)=\left(I+\frac{\bar{\zeta}_{i}-\zeta_{i}}{\zeta-\bar{\zeta}_{i}} P_{1}^{[i]}\right), \quad P_{1}^{[i]}=\frac{\left|\chi_{1}^{[i]}\right\rangle\left\langle\chi_{1}^{[i]}\right|}{\left\langle\chi_{1}^{[i]} | \chi_{1}^{[i]}\right\rangle}.
\]

In addition, the transformation between different potential matrices is:
\[
Q^{(N)}=Q+i\left[\sigma_{3}, \sum_{j=0}^{N-1}\left(\bar{\zeta_{1}}-\zeta_{1}\right) P_{1}^{[j]}\right].
\]
In this expression, $P_{1}^{[i]}$ is rank-one matrices, so $\Gamma_{1}^{[i]}(\zeta)$ can be also decomposed into the summation of simple fraction, that means the multiple product form of $T_{N}$ can be directly simplified by the conclusion of Lemma $1$. In other words, the above generalized Darboux matrix for Hirota equation can be given in the following theorem:
\begin{theorem}
In the case of one pair of elementary high-order zero, the generalized Darboux matrix for Hirota equation can be represented as \cite{ss}:
\[
T_{N}=I-Y M^{-1} \bar{\mathcal{D}}(\zeta) Y^{\dagger},
\]
where $\bar{\mathcal{D}}(\zeta)$ is $N \times N$ block Toeplitz matrix which has been given before, $Y$ is a $2 \times N$ matrix:
\[
\begin{array}{c}
Y=\left(\left|v_{1}\right\rangle, \ldots, \frac{\left|v_{1}\right\rangle^{(N-1)}}{(N-1) !}\right), \\
\left|v_{1}\right\rangle^{(j)}=\lim_{\epsilon \rightarrow 0} \frac{d^{j}}{d \epsilon^{j}}\left|v_{1}\left(\zeta_{1}+\epsilon\right)\right\rangle,
\end{array}
\]
and $M$ is $N \times N$ matrix:
\[
M=\left
(\begin{array}{ll}
M_{j,k}^{[m,n]}
\end{array}\right)_{N \times N}
\]
with
\[
 M_{j,k}^{[m,n]}=\lim _{\epsilon, \bar{\epsilon} \rightarrow 0} \frac{1}{(m-1) !(n-1) !} \frac{\partial^{m-1}}{\partial \epsilon^{m-1}} \frac{\partial^{n-1}}{\partial(\bar{\epsilon})^{n-1}}\left[\frac{\left\langle v_{j} | v_{k}\right\rangle}{\zeta_{j}-\bar{\zeta}_{k}+\epsilon-\bar{\epsilon}}\right].
\]
\end{theorem}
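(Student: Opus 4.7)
The plan is to reduce the claim to the preceding lemma via the generalized Darboux construction sketched just above the statement, and then to identify the resulting vector parameters and Gram matrix explicitly. Since $T_N = \Gamma_1^{[N-1]}\cdots\Gamma_1^{[0]}(\zeta)$ is a product of $N$ rank-one elementary dressing factors, each adding one unit to the order of the zero of $|T_N|$ at $\zeta_1$ and of the pole at $\bar{\zeta}_1$, the total determinant is $\bigl(\tfrac{\zeta-\zeta_1}{\zeta-\bar{\zeta}_1}\bigr)^{N}$. The preceding lemma therefore applies with $n=N$, giving a representation $T_N = I + V\bar{\mathcal{D}}(\zeta) W^{\dagger}$ for some $2\times N$ matrices $V,W$. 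The involution $\langle\bar{v}_{j}| = |v_{j}\rangle^{\dagger}$ propagates through each factor $\Gamma_1^{[j]}$, so $W = V$, and only one family of vectors has to be identified, together with the Gram matrix $M$ sitting between them.

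First I would expand the iterated vectors $|\chi_1^{[j]}\rangle = \lim_{\epsilon\to 0}\epsilon^{-j}\Gamma_1^{[j-1]}\cdots\Gamma_1^{[0]}(\zeta_1+\epsilon)|v_1(\zeta_1+\epsilon)\rangle$ in powers of $\epsilon$ by the Leibniz rule, using that each $\Gamma_1^{[i]}$ is analytic at $\zeta_1$. This presents $|\chi_1^{[j]}\rangle$ as an upper-triangular linear combination of the Taylor coefficients $|v_1\rangle^{(m)}/m!$ with $m\le j$ and nonvanishing leading coefficient. Hence the change of basis from the column family $(|\chi_1^{[0]}\rangle,\ldots,|\chi_1^{[N-1]}\rangle)$ to $Y=\bigl(|v_1\rangle,\,|v_1\rangle^{(1)},\ldots,\,|v_1\rangle^{(N-1)}/(N-1)!\bigr)$ is an invertible upper-triangular $N\times N$ matrix. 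The compact formula claimed in the theorem is then essentially the preceding lemma rewritten in the $Y$-basis, with the triangular basis change absorbed into a redefinition of the Gram matrix.

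To pin down $M$ I would substitute the ansatz $T_N = I - Y M^{-1}\bar{\mathcal{D}}(\zeta) Y^{\dagger}$ into the kernel conditions built into the generalized DT, namely $T_N(\zeta_1+\epsilon)\,Y(\epsilon) = O(\epsilon^{N})$ where $Y(\epsilon) = \bigl(|v_1(\zeta_1+\epsilon)\rangle,\partial_\epsilon|v_1(\zeta_1+\epsilon)\rangle,\ldots\bigr)$. Matching coefficients of $\epsilon^0,\ldots,\epsilon^{N-1}$ turns this into a linear system whose solvability forces the $(j,k,m,n)$ block of $M$ to equal $\frac{1}{(m-1)!(n-1)!}\partial^{\,m-1}_{\epsilon}\partial^{\,n-1}_{\bar{\epsilon}}\tfrac{\langle v_j|v_k\rangle}{\zeta_j-\bar{\zeta}_k+\epsilon-\bar{\epsilon}}$, which is exactly the formula in the statement. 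Uniqueness of $T_N$ is a standard Liouville-type argument: any dressing matrix with the prescribed pole locations, pole orders, canonical normalization $T_N(\infty)=I$, and the specified kernel data is determined completely.

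The main obstacle is the Taylor-expansion bookkeeping associated with the iterated limit defining the $|\chi_1^{[j]}\rangle$: each $\Gamma_1^{[i]}$ has its own pole at $\bar{\zeta}_1$ but evaluates to a finite, generically rank-deficient matrix at $\zeta_1$, and the cumulative product applied to $|v_1(\zeta_1+\epsilon)\rangle$ must be shown to vanish to order $\epsilon^{N}$ without introducing spurious lower-order terms. Tracking the leading and subleading coefficients carefully enough to confirm both the upper-triangular structure of the change of basis and the explicit $\partial_\epsilon\partial_{\bar{\epsilon}}$ formula for the Gram matrix is where the most delicate work lies; once it is in place, the compact representation follows by combining the preceding lemma with the gDT construction, in the spirit of \cite{sh2003,llm,ss}.
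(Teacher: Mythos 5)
Your proposal follows essentially the same route as the paper, which builds $T_{N}$ as the product $\Gamma_{1}^{[N-1]}\cdots\Gamma_{1}^{[0]}$ of rank-one elementary factors, invokes the preceding lemma to rewrite that product as a sum of simple fractions, and then appeals to ``direct calculation as in the cited references'' for the identification of $Y$ and $M$ --- precisely the Taylor-expansion bookkeeping and kernel-condition matching you spell out. Your added details (the upper-triangular change of basis from the $\left|\chi_{1}^{[j]}\right\rangle$ to the Taylor coefficients of $\left|v_{1}\right\rangle$, and the Liouville-type uniqueness argument) are correct and simply make explicit what the paper leaves to the references.
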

Theorem 1 can be proved via directly calculation as in \cite{llm}. Therefore, if $\Phi^{|N|}=T_{N} \Phi$, then $\Phi^{[N]}$ indeed solves spectral problem \eqref{b}.
Substituting $T_{N}$ into the above relation and letting spectral $\zeta$ go to infinity, we have the relation:
\[
Q^{[N]}=Q-i\left[\sigma_{3},\left(\left|v_{1}\right\rangle, \ldots, \frac{\left|v_{1}\right\rangle^{(N-1)}}{(N-1) !}\right) M^{-1}\left(\begin{array}{c}
\left\langle v_{1}\right| \\
\vdots \\
\frac{\langle v_{1}|^{(N-1)}}{(N-1)!}
\end{array}\right)\right].
\]
Moreover, the transformations between potential functions are:
\[\label{59}
Q_{j, l}^{[N]}=Q_{j, l}^{[0]}+2i\frac{|A_{j, l}|}{|M|}, \quad A_{j, l}=\left[\begin{array}{cc}
M & Y[l]^{\dagger} \\
Y[j] & 0
\end{array}\right], 1 \leq j, l \leq 2.
\]
Here the subscript $_{j, l}$ denotes the $j$ th row and $l$ th column element of matrix $A$, and $Y[l]$ represents the $j$ th row of matrix $Y$.

\section{High-order soliton solution for the Hirota and IVC-equation}

Firstly, choice a single pair of purely imaginary eigenvalues, $\zeta_{1}=i \eta_{1} \in i \mathbb{R}_{+},$ and $\bar{\zeta}_{1}=i \bar{\eta}_{1} \in i \mathbb{R}_{-},$ where $\eta_{1}>$
0 and $\bar{\eta}_{1}=-\eta_{1}<0$ to get a brevity second-order fundamental soliton expression.  In this case, taking $v_{10}(\epsilon)=\left[1, \mathrm{e}^{i \theta_{10}-\theta_{11}\epsilon}\right]^{\mathrm{T}}$ and $\bar{v}_{10}(\bar{\epsilon})=\left[1, \mathrm{e}^{i \bar{\theta}_{10}-\tilde{\theta}_{11} \bar{\epsilon}}\right]^{\mathrm{T}},$ where $\theta_{10}, \theta_{11}, \bar{\theta}_{10}, \bar{\theta}_{11}$ are real constants. Substituting these expressions into high-order soliton formula \eqref{59} with  $N=2$, $Q_{1,2}^{[0]}=0$, then  the analytic expression for the second-order  soliton solution $u^{[2]}(T, Z)$ is obtained as follows.
\begin{equation}\label{b29}
\begin{aligned}
2(\bar{\eta}_{1}-\eta_{1})\frac{t_{11}e^{2\bar{\eta}_{1}T+(4i\alpha\bar{\eta}_{1}^{2}-8\beta\bar{\eta}_{1}^3)Z+i\bar{\theta_{10}}}
+t_{12}e^{2\eta_{1} T+(4i\alpha\eta_{1}^{2}-8\beta\eta_{1}^3)Z-i\theta_{10}}}{4 \cosh ^{2}((\eta_{1}-\bar{\eta}_{1})T+4\beta(\bar{\eta}_{1}^{3}-\eta_{1}^{3})Z+2i\alpha(\eta_{1}^{2}-\bar{\eta_{1}}^{2})Z-\frac{i}{2}(\theta_{10}+\bar{\theta}_{10}))+F(T, Z)},
\end{aligned}
\end{equation}
\begin{equation}\notag
\begin{split}
&t_{11}=(\bar{\eta}_{1}-\eta_{1})(-24\eta_{1}^{2}\beta Z+8i\eta_{1}\alpha Z+2T+i\theta_{11})-2,\\
&t_{12}=(\eta_{1}-\bar{\eta}_{1})(-24\bar{\eta}_{1}^{2}\beta Z+8i\bar{\eta}_{1}\alpha Z+2T-i\bar{\theta}_{11})-2,\\
&F(T,Z)=-(t_{11}+2)(t_{12}+2).
\end{split}
\end{equation}
The second-order soliton solution $u^{[2]}(T, Z)$ combines exponential functions with algebraic polynomials, contains six real parameters: $\eta_{1}, \bar{\eta}_{1}, \theta_{10}, \bar{\theta}_{10}, \theta_{11},$ and $\bar{\theta}_{11}$. The center trajectory $\Sigma_{+}$ and $\Sigma_{-}$ for this solution can be approximatively described by the following two curves:
\[
\begin{aligned}
\Sigma_{+}: (\eta_{1}-\bar{\eta}_{1})T+4\beta(\bar{\eta}_{1}^{3}-\eta_{1}^{3})Z+\frac{1}{2}ln|F|=0,\\
\Sigma_{-}: (\eta_{1}-\bar{\eta}_{1})T+4\beta(\bar{\eta}_{1}^{3}-\eta_{1}^{3})Z-\frac{1}{2}ln|F|=0.
\end{aligned}
\]

Moreover, regardless of the effect brought by the logarithmic part when $Z \rightarrow \pm\infty$, two solitons separately move along each curve in a nearly same velocity, which is approximate to \[ \begin{aligned}
V=-4\beta (\eta_{1}^{2}+\eta_{1}\bar{\eta}_{1}+\eta_{1}^{2}).
\end{aligned} \]
Due to $\eta_{1}-\bar{\eta}_{1}>0$, with simple calculation, it is found that $|u^{[2]}(T, Z)|$ possesses the following asymptotic estimation:
\begin{equation}
|u^{[2]}(T, Z)|\rightarrow 0, \quad |T| \rightarrow \pm \infty.
\end{equation}
However, with the development of time, a simple asymptotic analysis with estimation on the leading-order terms shows that: when soliton \eqref{b29} is moving on $\Sigma_{+}$ or $\Sigma_{-}$, its amplitudes $|u^{[2]}(T, Z)|$ can approximately vary as
\begin{equation}
 |u^{[2]}(T, Z)|  \sim
\begin{cases}
\frac{2\left|\eta_{1}-\bar{\eta}_{1}\right| \mathrm{e}^{\left(\eta_{1}+\bar{\eta}_{1}\right)T}}{\left|\mathrm{e}^{4i\alpha(\eta_{1}^{2}-\bar{\eta}_{1}^{2})Z-i (\operatorname{arg}[\mathcal{F}(T, Z)]+2k\pi)+ i\left(\theta_{10}+\bar{\theta}_{10}\right)}+1\right|}, \quad Z \sim  +\infty,\\
\\
\frac{2\left|\eta_{1}-\bar{\eta}_{1}\right| \mathrm{e}^{-(\eta_{1}+\bar{\eta}_{1})T}}{\left|\mathrm{e}^{-4i (\alpha(\eta_{1}^{2}-\bar{\eta}_{1}^{2})Z-i (\operatorname{arg}[\mathcal{F}(T, Z)]+2k\pi)- i\left(\theta_{10}+\bar{\theta}_{10}\right)}+1\right|}, \quad Z \sim -\infty,
\end{cases}
\end{equation}
where $k\in \Z$.

 Letting $\eta_{1}= \frac{i}{2}$, $\bar{\eta}_{1}=-\frac{i}{2}$ and $\theta_{10}=\bar{\theta}_{10}=\theta_{11}=\bar{\theta}_{11}=0$, the value of $(\alpha,\beta)$ will change the velocity, direction and shape of the soliton figure.  We can divide the analysis into three cases three cases as shown in  
Figs.\eqref{01gjgz}, \eqref{11gjgz} and \eqref{10gjgz}: the first case is that $\alpha = 0$ and $\beta = 1$;  the second case is $\alpha = 1$ and $\beta =1$;  the third case is $\alpha = 0$ and $\beta = 1$.  Graphically, the soliton evolution of the Hirota equation is more similar to that of the KdV equation. That is, the value of higher term coefficient $\beta$ plays a decisive role in dynamic analysis.
\begin{figure}[ht!]
\centering
\subfigure[]{\label{01gjgz}
\begin{minipage}[b]{0.2\textwidth}
\includegraphics[width=4cm]{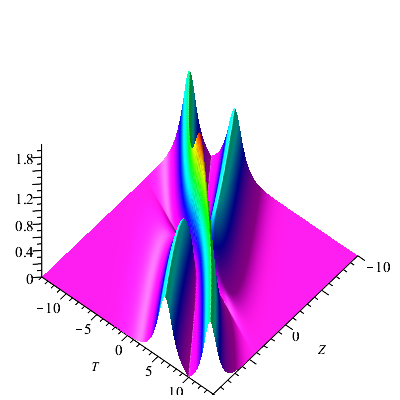} 
\end{minipage}
}
\subfigure[]{\label{11gjgz}
\begin{minipage}[b]{0.2\textwidth}
\includegraphics[width=4cm]{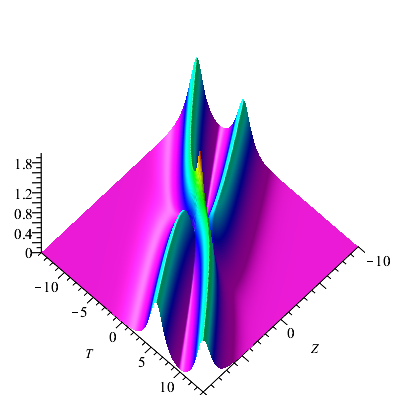} 
\end{minipage}
}
\subfigure[]{\label{10gjgz}
\begin{minipage}[b]{0.2\textwidth}
\includegraphics[width=4cm]{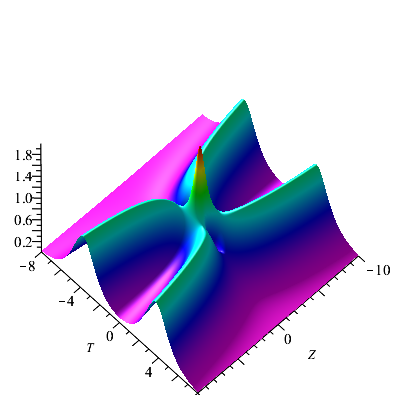} 
\end{minipage}
}
\caption{ The evolution plot of the second-order soliton solutions in the Hirota equation: (a) $\alpha=0$ and $\beta=1$; (b) $\alpha=1$ and $\beta=1$; (c) $\alpha=1$ and $\beta=1$. }
\label{gj2gz}
\end{figure}

Using the explicit transformation \eqref{tr}, we can obtain abundance of high-order soliton solutions $q^{[N]}(T, Z)$ of the IVC-Hirota equation from the known solutions 
$u^{[N]}(T, Z)$ of the Hirota equation. Now taking $N=2$, $Q_{1,2}^{[0]}=\theta_{10}=\bar{\theta}_{10}=\theta_{11}=\bar{\theta}_{11}=0$, $\zeta_{1}= 1+i$ and $\zeta_{2}=1-i$, without loss of generality, we present below the dynamic evolution analysis of the second-order soliton solution of the variable coefficient equation when $\alpha=0$. When $\alpha_{1}(z) =\alpha_{4}(z) = z$, the solitons are symmetric about the line $z=0$ and have only one crest. Comparing with Fig.\eqref{gq11}, we can find that the dynamics of solution $q^{2}$ in Fig.\eqref{gq12} is different that the symmetric is moved to $z=-1$ and the peaks change from one to two when $\alpha_{1}(z) = \alpha_{4}(z) = 1+z$. When $\alpha_{1}(z) = z$ and $\alpha_{4}(z) = z^{2}+1$, the maximum amplitude of the second-order soliton solution appears at the position of interaction of soliton, and the dynamic behaviour of the other positions is similar to that of the corresponding $1$-soliton solution. Similarly, we also consider the cases of 
$\alpha_{1}(z)=\alpha_{4}(z)=z^{2}$ and $\alpha_{1}(z) = z^{2}$, $\alpha_{4}(z) = 1+z^{2}$. The detailed dynamic behaviour of the solution can be observed in Fig. \eqref{zzzzz}.
\begin{figure}[ht!]
\centering
\subfigure[]{\label{gq11}
\begin{minipage}[b]{0.18\textwidth}
\includegraphics[width=3.2cm]{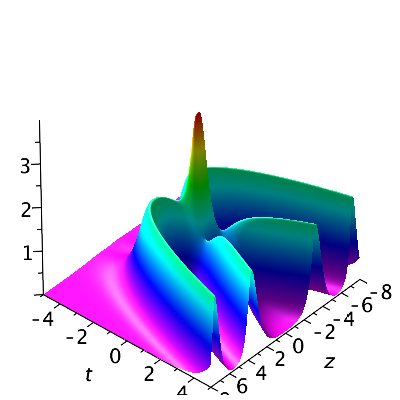} \\
\end{minipage}
}
\subfigure[]{\label{gq12}
\begin{minipage}[b]{0.18\textwidth}
\includegraphics[width=3.2cm]{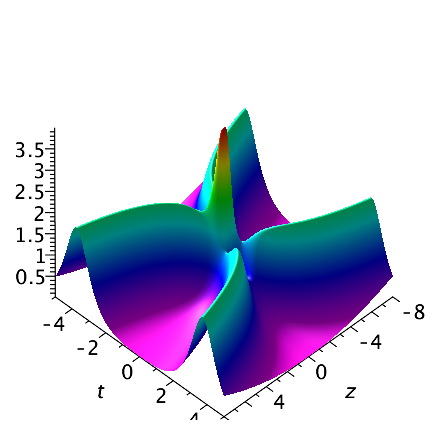} \\
\end{minipage}
}
\subfigure[]{\label{gq21}
\begin{minipage}[b]{0.18\textwidth}
\includegraphics[width=3.2cm]{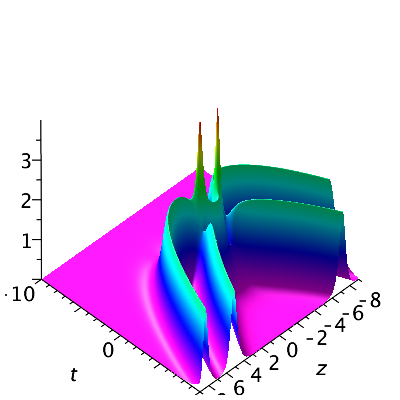} \\
\end{minipage}
}
\subfigure[]{\label{gq22}
\begin{minipage}[b]{0.18\textwidth}
\includegraphics[width=3.2cm]{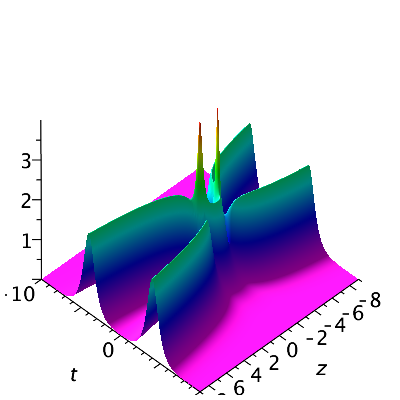} \\
\end{minipage}
}
\subfigure[]{\label{gq31}
\begin{minipage}[b]{0.18\textwidth}
\includegraphics[width=3.2cm]{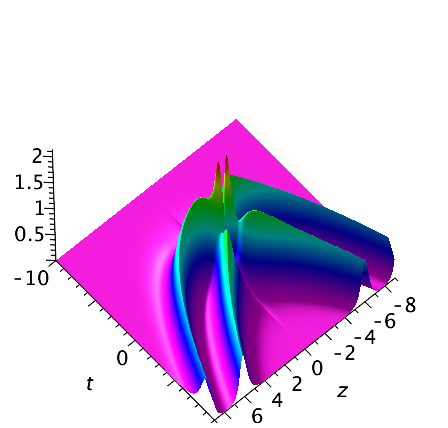} \\
\end{minipage}
}
\subfigure[]{\label{gq32}
\begin{minipage}[b]{0.18\textwidth}
\includegraphics[width=3.2cm]{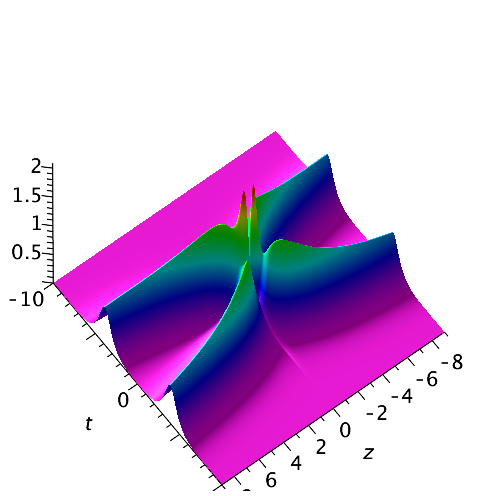} \\
\end{minipage}
}
\subfigure[]{\label{gq41}
\begin{minipage}[b]{0.18\textwidth}
\includegraphics[width=3.2cm]{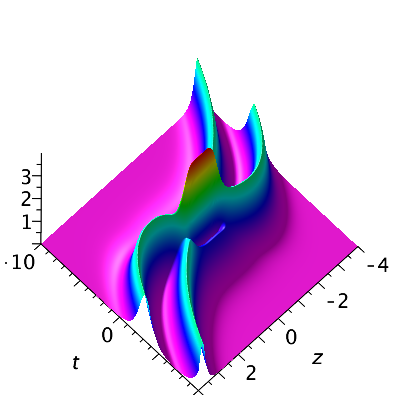} \\
\end{minipage}
}
\subfigure[]{\label{gq42}
\begin{minipage}[b]{0.18\textwidth}
\includegraphics[width=3.2cm]{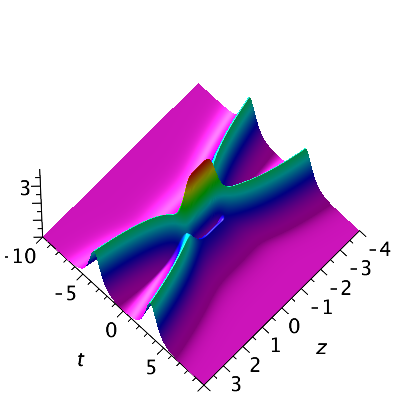} \\
\end{minipage}
}
\subfigure[]{\label{gq51}
\begin{minipage}[b]{0.18\textwidth}
\includegraphics[width=3.2cm]{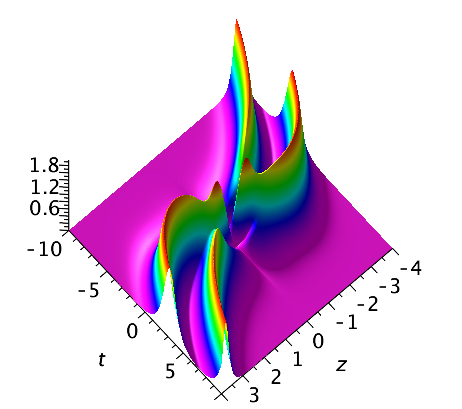} \\
\end{minipage}
}
\subfigure[]{\label{gq52}
\begin{minipage}[b]{0.18\textwidth}
\includegraphics[width=3.2cm]{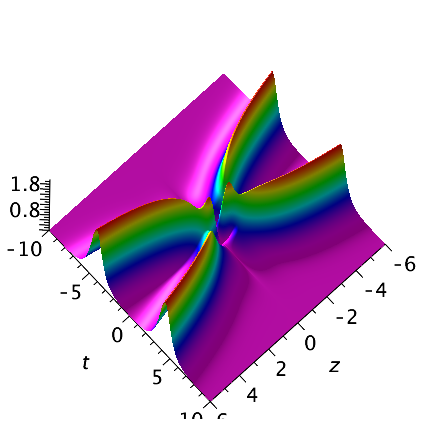} \\
\end{minipage}
}
\caption{The evolution plot of the second-order soliton solutions in the IVC-Hirota equation: $\alpha_{1}(z) =\alpha_{4}(z) = z$: ((a) $\delta=1$; (b) $\delta=\frac{4}{3}$.)
$\alpha_{1}(z)=\alpha_{4}(z)= 1+z$: ((c) $\delta=1$; (d) $\delta=\frac{4}{3}$.) $\alpha_{1}(z)=z$ and $\alpha_{4}(z) = z^{2}+1$: ((e) $\delta=1$; (f) $\delta=\frac{4}{3}$.)
 $\alpha_{1}(z)=\alpha_{4}(z)=z^{2}$: ((g) $\delta=1$; (h) $\delta=\frac{4}{3}$.) $\alpha_{1}(z) = z^{2}$ and $\alpha_{4}(z) = 1+z^{2}$: ((i) $\delta=1$; (j) $\delta=\frac{4}{3}$.)
   }
\label{zzzzz}
\end{figure}

When we take trigonometric functions as the excitations function, we can obtain very rich non-singular convergent second-order solutions which are shown in Fig.\eqref{gqsin} by adjusting the parameters. For example, when $\alpha_{1}(z) =\alpha_{4}(z) =10sin(z)$, we can construct heart-shaped periodic waves when $\delta = 1$ and $\delta = 2$ (see Figs.\eqref{gq141} and \eqref{gq143}). For $\delta =\frac{4}{3}$, an O-shaped periodic wave is plotted in Fig.\eqref{gq142}. When $\alpha_{1}(z) = \alpha_{4}(z) = sin(5z)$ or $\alpha_{1}(z)=\alpha_{4}(z) = sin(3z)$, we can see the dynamic behaviours in Figs.\eqref{gq91} and \eqref{gq1112} which are similar to the breather solution, there are a peak and two troughs in each periodic.
\begin{figure}[ht]
\centering
\subfigure[]{\label{gq71}
\begin{minipage}[b]{0.18\textwidth}
\includegraphics[width=3.2cm]{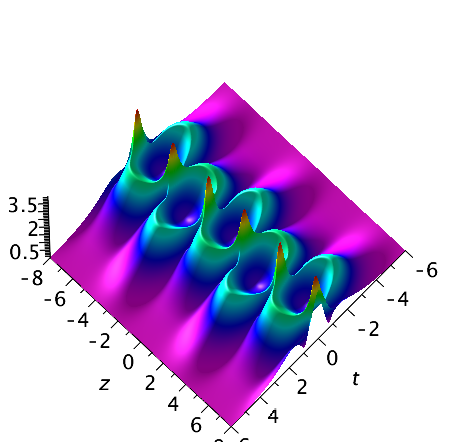} \\
\end{minipage}
}
\subfigure[]{\label{gq72}
\begin{minipage}[b]{0.18\textwidth}
\includegraphics[width=3.2cm]{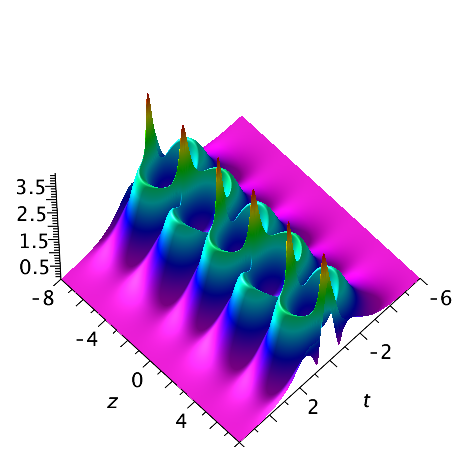} \\
\end{minipage}
}
\subfigure[]{\label{gq73}
\begin{minipage}[b]{0.18\textwidth}
\includegraphics[width=3.2cm]{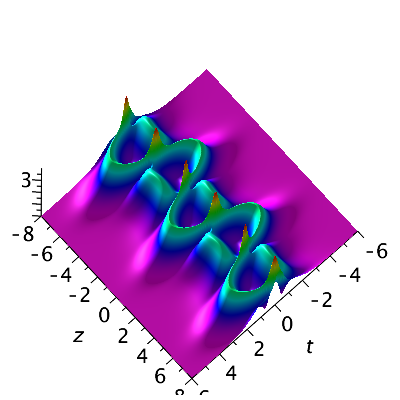} \\
\end{minipage}
}
\subfigure[]{\label{gq141}
\begin{minipage}[b]{0.18\textwidth}
\includegraphics[width=3.2cm]{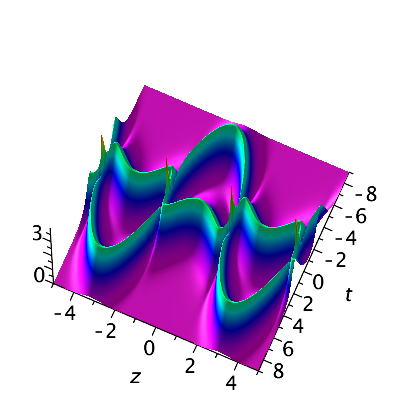} \\
\end{minipage}
}
\subfigure[]{\label{gq142}
\begin{minipage}[b]{0.18\textwidth}
\includegraphics[width=3.2cm]{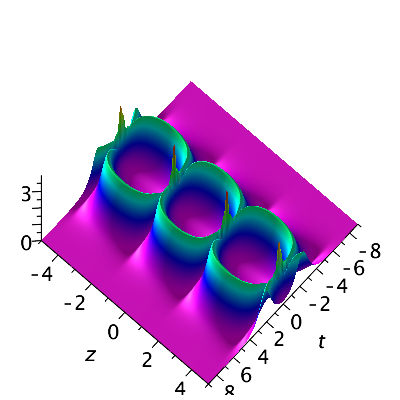} \\
\end{minipage}
}
\subfigure[]{\label{gq143}
\begin{minipage}[b]{0.18\textwidth}
\includegraphics[width=3.2cm]{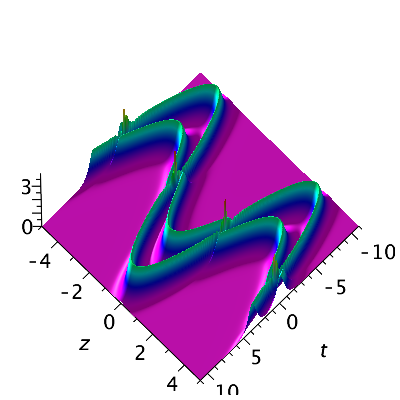} \\
\end{minipage}
}
\subfigure[]{\label{gq81}
\begin{minipage}[b]{0.18\textwidth}
\includegraphics[width=3.2cm]{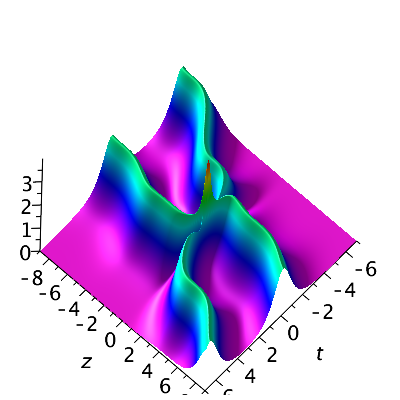} \\
\end{minipage}
}
\subfigure[]{\label{gq82}
\begin{minipage}[b]{0.18\textwidth}
\includegraphics[width=3.2cm]{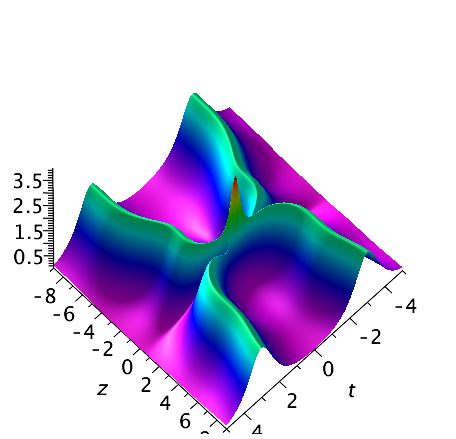} \\
\end{minipage}
}
\subfigure[]{\label{gq83}
\begin{minipage}[b]{0.18\textwidth}
\includegraphics[width=3.2cm]{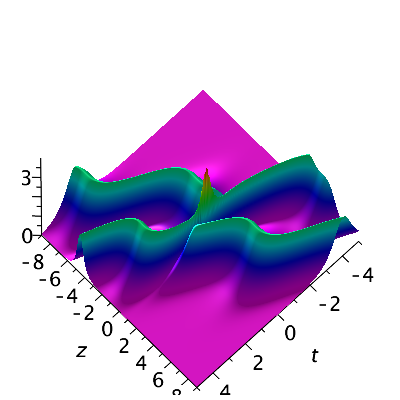} \\
\end{minipage}
}
\subfigure[]{\label{gq101}
\begin{minipage}[b]{0.18\textwidth}
\includegraphics[width=3.2cm]{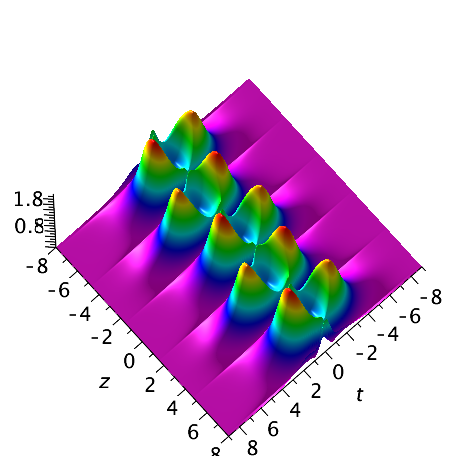} \\
\end{minipage}
}
\subfigure[]{\label{gq102}
\begin{minipage}[b]{0.18\textwidth}
\includegraphics[width=3.2cm]{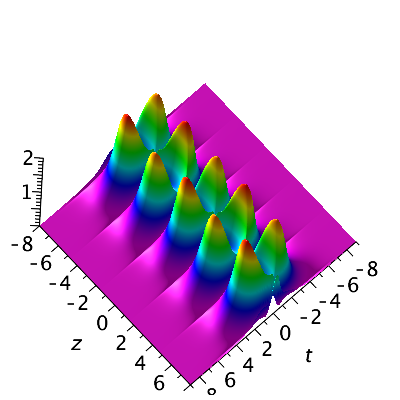} \\
\end{minipage}
}
\subfigure[]{\label{gq103}
\begin{minipage}[b]{0.18\textwidth}
\includegraphics[width=3.2cm]{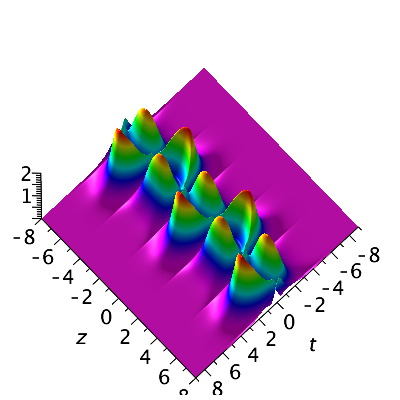} \\
\end{minipage}
}
\subfigure[]{\label{gq91}
\begin{minipage}[b]{0.18\textwidth}
\includegraphics[width=3.2cm]{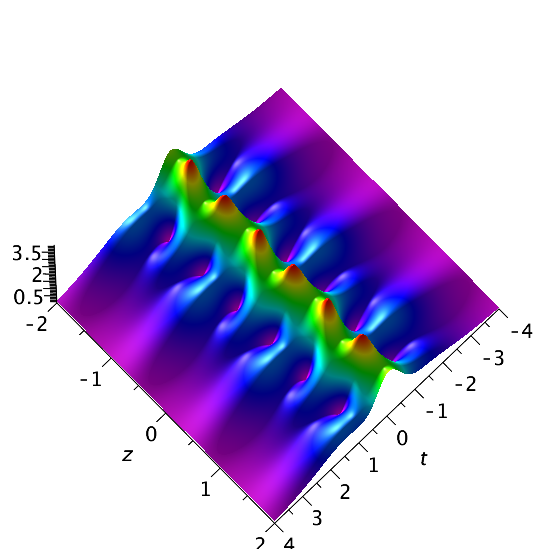} \\
\end{minipage}
}
\subfigure[]{\label{gq111}
\begin{minipage}[b]{0.18\textwidth}
\includegraphics[width=3.2cm]{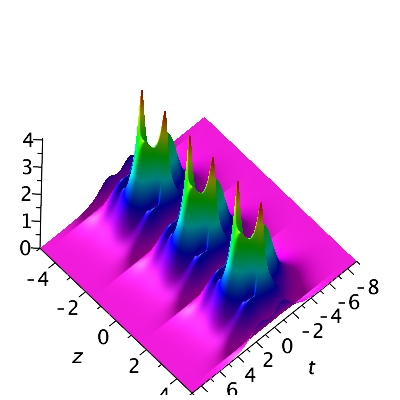} \\
\end{minipage}
}
\subfigure[]{\label{gq1112}
\begin{minipage}[b]{0.18\textwidth}
\includegraphics[width=3.2cm]{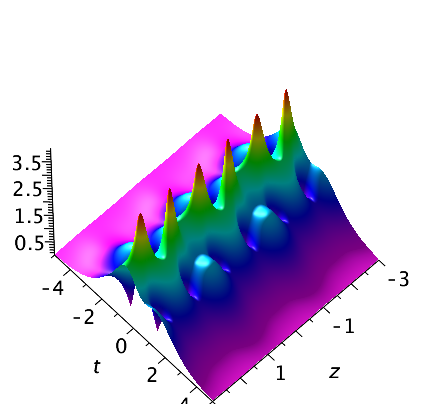} \\
\end{minipage}
}
\caption{The evolution plot of the second-order soliton solutions in the IVC-Hirota equation: $\alpha_{1}(z) = \alpha_{4}(z) = sin(z)$: ((a) $\delta = 1$; (b) $\delta =\frac{4}{3}$; (c) $\delta = 2$.) 
$\alpha_{1}(z) =\alpha_{4}(z) =10sin(z)$: ((d) $\delta = 1$; (e) $\delta =\frac{4}{3}$; (f) $\delta = 2$.)
$\alpha_{1}(z)=\alpha_{4}(z) =1+ sin(z)$: ((g) $\delta = 1$; (h) $\delta =\frac{4}{3}$; (i) $\delta = 2$.)
$\alpha_{1}(z) = sin(z)$ and $\alpha_{4}(z) = tan(z)$: ((j) $\delta = 1$; (k) $\delta =\frac{4}{3}$; (l) $\delta = 2$.)
(m) $\alpha_{1}(z) = \alpha_{4}(z) = sin(5z)$ and $\delta = 1$;
(n) $\alpha_{1}(z) = sin(2z)$, $\alpha_{4}(z) = tan(z)$ and $\delta = 1$; (o) $\alpha_{1}(z) = \alpha_{4}(z) = sin(3z)$ and $\delta = 1.$}
\label{gqsin}
\end{figure}

In the last, taking $\alpha_{1}(z) = tanh(z)$, in order to get meaningful nonsingular convergent solutions we can let $\alpha_{4}(z) =tanh(z)$ or $\alpha_{4}(z) =z^{2}+1$, Fig.\eqref{gq2tanh} shows the dynamic evolution process of the second-order soliton solution with different values of the parameter $\delta$. 
\begin{figure}[ht!]
\centering
\subfigure[]{\label{gq121}
\begin{minipage}[b]{0.25\textwidth}
\includegraphics[width=3.4cm]{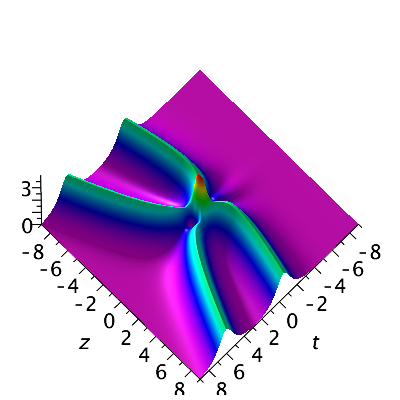} \\
\end{minipage}
}
\subfigure[]{\label{gq122}
\begin{minipage}[b]{0.25\textwidth}
\includegraphics[width=3.4cm]{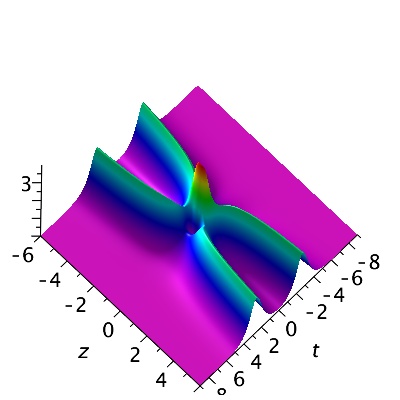} \\
\end{minipage}
}
\subfigure[]{\label{gq123}
\begin{minipage}[b]{0.25\textwidth}
\includegraphics[width=3.4cm]{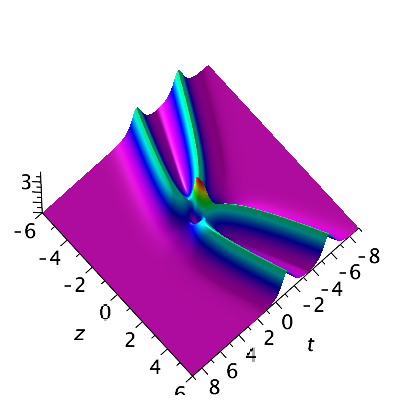} \\
\end{minipage}
}
\subfigure[]{\label{gq131}
\begin{minipage}[b]{0.25\textwidth}
\includegraphics[width=3.4cm]{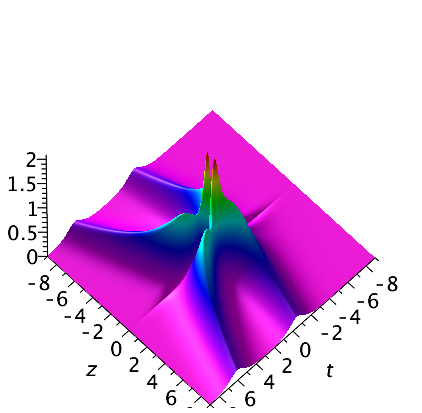} \\
\end{minipage}
}
\subfigure[]{\label{gq132}
\begin{minipage}[b]{0.25\textwidth}
\includegraphics[width=3.4cm]{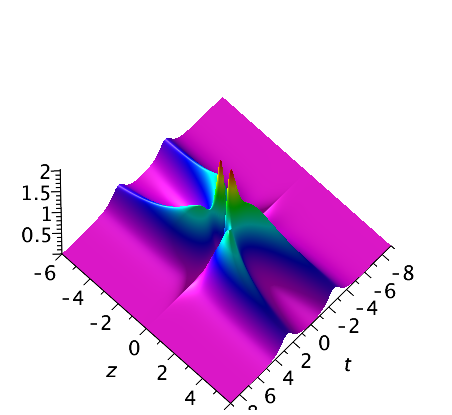} \\
\end{minipage}
}
\subfigure[]{\label{gq133}
\begin{minipage}[b]{0.25\textwidth}
\includegraphics[width=3.4cm]{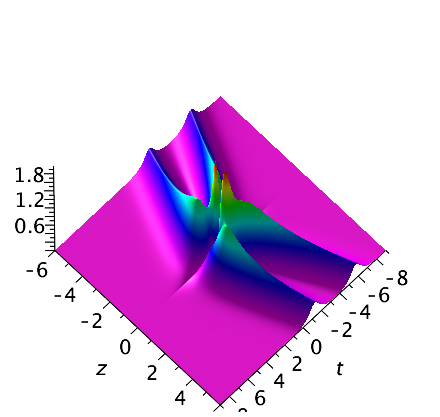} \\
\end{minipage}
}
\caption{The evolution plot of the second-order soliton solutions in the IVC-Hirota equation: $\alpha_{1}(z) =\alpha_{4}(z) = tanh(z)$: ((a) $\delta = 1$; (b) $\delta =\frac{4}{3}$; (c) $\delta = 2$.)
 $\alpha_{1}(z) =tanh(z)$ and $\alpha_{4}(z) =z^{2}+1$: ((d) $\delta = 1$; (e) $\delta =\frac{4}{3}$; (f) $\delta = 2$.)
}
\label{gq2tanh}
\end{figure}  

\section{Conclusion and discussion}

In summary, many new soliton solutions for the IVC-Hirota equation are implemented by using the RH method and a special transformation relationship.
First, the soliton matrices are constructed by studying the corresponding RH problem. By regularizing the RH problem with simple zeros, we get the general N-soliton formula for the Hirota equation. In addition, the high-order soliton matrices are also obtained by considering the multiple zeros of the RH problem. Then the $N$-soliton matrix and high-order soliton matrices of the IVC-Hirota equation are presented from the correspond soliton matrix of generalized Hirota equation by a special transformation relationship. We find when we let second-order term coefficient $\alpha$ is equal to 0 in the transformation relationship, third-order term coefficient $\beta$  disappear from the solution of the IVC-Hirota equation. Namely, the high-order term coefficient $\beta$ has no influence on the solution of IVC-Hirota equation which is obtained by the special transformation relationship with $\alpha=0$. 

The 2-soliton collision dynamics, the asymptotic behavior of the 2-soliton and the long time asymptotic estimates for the high-order one-soliton solution of Hirota eqution are detailed in this paper. For Hirota equation, compared with the second-order coefficient $\alpha$, the value of the higher-order coefficient $\beta$ plays a more important role in the dynamic behavior of the solution.  In addition, a detailed dynamic analysis is given for the  N-soliton solution and the high-order soliton solution of the IVC-Hirota equation. Most notably, by analyzing the dynamics of the N-solitons and high-order solitons of the IVC-Hirota equation, we have found many new waveforms that have never been reported before, which are very important in theory and practice. For example, when both nonlinear effect and dispersion effect are taken as periodic functions, interesting new waves such as heart-shaped periodic wave and O-shaped periodic wave can be constructed by adjusting the parameters.

Following the work in this paper, we plan to use numerical methods in future work to simulate high-order soliton solutions of integrable equations with non-zero boundaries.



\begin{thebibliography}{99}
\bibitem{be1967}  D. J. Benneyand, A. C. Newell, The propagation of nonlinear wave envelopes, J. Math. Phys. 46 (1967) 133-139.
\bibitem{za1968} V. E. Zakharov, Stability of periodic waves of finite amplitude on the surface of a deep fluid, Sov. Phys. J. Appl. Mech. Tech. 4 (1968)  190-194.
\bibitem{ha19731} A. Hasegawa, F. Tappert, Transmission of stationary nonlinear optical pulses in dispersive dielectric fibres I. Anomalous
dispersion, Appl. Phys. Lett. 23 (1973) 142-144.
\bibitem{ha19732} A. Hasegawa, F. Tappert, Transmission of stationary nonlinear optical pulses in dispersive dielectric fibres II. Normal dispersion, Appl. Phys. Lett. 23  (1973) 171-172.
\bibitem{ha1972} V. E. Zakharov, Collapse of langmuir waves, Sov. Phys. J. Appl. Mech. Tech. 35 (1972) 908-914.
\bibitem{post-1989-ap} D. Post, J. McKelvie,  M. R. Tu, F. L. Dai, Fabrication of holographic gratings using a moving point source, Appl. optics 28 (1989) 3494-3497.
\bibitem{lib-2005} B. Li, Exact soliton solutions for the higher-order nonlinear schr\"{o}dinger equation, Int. J. Mod. Phys. C 16 (2005) 1225-1237.
INT J MOD PHYS C
\bibitem{vc4} R. Hirota, Exact envelope-soliton solutions of a nonlinear wave equation, J. Math. Phys. 14 (1973) 805-809.
\bibitem{vc5} N. Sasa and J. Satsuma,  New-type of soliton solutions for a higher-order nonlinear Schr\"{o}dinger equation, J. Phys. Soc. Jpn. 60 (1991) 409-417.
\bibitem{vc6} M. J. Potasek and M. Tabor, Exact solutions for an extended nonlinear Schr\"{o}dinger equation, Phys. Lett. A 154 (1991) 449-452.
\bibitem{vc7} K. Porsezian and K. Nakkeeran, 
Optical solitons in presence of Kerr dispersion and self-frequency shift, Phys. Rev. Lett. 76 (1996) 3955.
\bibitem{vc8} J. Kim, Q. H. Park, and H. J. Shin,
Conservation laws in higher-order nonlinear Schr\"{o}dinger equations, Phys. Rev. E 58 (1998) 6746.
\bibitem{vc9} M. Gedalin, T. C. Scott, and Y. B. Band, Optical solitary waves in the higher order nonlinear Schr\"{o}dinger equation, Phys. Rev. Lett. 78 (1997) 448.
\bibitem{vc10} S. L. Palacios, A. Guinea, J. M. Fernandez-Diaz, and R. D. Crespo, Dark solitary waves in the nonlinear Schr\"{o}dinger equation with third order dispersion, self-steepening, and self-frequency shift, Phys. Rev. E 60, (1999) 45.
\bibitem{vc11} E. M. Gromov, L. V. Piskunova, V. V. Tutin, Dynamics of wave packets in the frame of third-order nonlinear Schr\"{o}dinger equation, Phys. Lett. A 256 (1999) 153-158.
\bibitem{vc13} W. P. Hong, Optical solitary wave solutions for the higher order nonlinear Schr\"{o}dinger equation with cubic-quintic non-Kerr terms, Opt. Commun. 194 (2001) 217-223.
\bibitem{vc14} L. Li, Z. H. Li, Z. Y. Xu, G. S. Zhou, Karl. H. Spatschek, Gray optical dips in the subpicosecond regime, Phys. Rev. E 66 (2002) 046616.
\bibitem{kodama-1985} Y. Kodama, Optical Solitons in a Monomode Fiber, J. Stat. Phys. 39 (1985) 597-613.
 \bibitem{vc29} E. Papaioannou, D. J. Frantzeskakis, and K. Hizanidis, An analytical treatment of the effect of axial inhomogeneity on femtosecond solitary waves near the zero dispersion point, IEEE J. Quantum Electron. 32 (1996) 145-154.
\bibitem{dai-2006-jpa} C. Q. Dai, J. F. Zhang, New solitons for the Hirota equation and generalized higher-order nonlinear Schr\"{o}dinger equation with variable coefficients, J. Phys. A: Math. Gen. 39 (2006) 723–737.
\bibitem{P-SAM-2010} P. Wang, B. Tian, W. J. Liu,  M. Li, K. Sun, Soliton Solutions for a Generalized Inhomogeneous Variable-Coefficient Hirota Equation with Symbolic Computation, Stud. Appl. Math. 125 (2010) 213-222.
\bibitem{tao-JNMP-2013}  J. S. He, Y. S.  Tao,  K. Porsezian, A. S. Fokas, Rogue wave management in an inhomogeneous nonlinear fibre with higher order effects, J. Nonlinear Math. Phy. 20 (2013) 407-419.
\bibitem{Rajan-2015} M. Rajan, A. Mahalingam, Nonautonomous solitons in modified inhomogeneous Hirota equation: soliton control and soliton interaction,  Nonlinear Dyn. 79  (2015) 2469-2484.
\bibitem{Gao-2017} X. Y. Gao, Looking at a nonlinear inhomogeneous optical fiber through the generalized higher-order variable-coefficient Hirota equation, Appl. Math. Lett. 73 (2017) 143-149.
\bibitem{Yang-RC-2021}  D. Y. Yang, B. Tian, C. C. Hu, S. H.  Liu,  W. R. Shan,  Y. Jiang, Conservation laws and breather-to-soliton transition for a variable-coefficient modified Hirota equation in an inhomogeneous optical fiber, Wave.  Random  Complex  (2021) 1-17.
 \bibitem{vc86}  A. R. Its, Asymptotics of solutions of the nonlinear Schr\"{o}dinger equation and isompnpdromic deformations of systems of linear equation, Sov. Math. Dokl. 24 (1981) 452-456. 
\bibitem{pwq-2019}  W. Q. Peng, S. F. Tian, X. B. Wang, T. T. Zhang, Y. Fang, Riemann–Hilbert method and multi-soliton solutions for three-component coupled nonlinear Schr\"{o}dinger equations, J. Geom. Phys. 146  (2019) 103508.
\bibitem{feg}  Y. Zhao, E. G. Fan, Inverse scattering transformation for the Fokas-Lenells equation with nonzero boundary conditions, arXiv: 1912.12400 (2019)
\bibitem{sa1991}  N. Sasa, J. Satsuma, New type of soliton solutions for a higher-order nonlinear Schr\"{o}inger equation, J. Phys. Soc. Japan. 60 (1991) 409-417.
\bibitem{yjks} J. K. Yang, Nonlinear Waves in Integrable and Non integrable Systems. SIAM, Philadelphia (2010)
\bibitem{llm2} B. l. Guo, L. M. Ling, Q. P. Liu, High-Order Solutions and Generalized Darboux Transformations of Derivative Nonlinear Schr\"{o}inger Equations, Stud. Appl. Math. 130 (2012) 317-344.
\bibitem{ss}  B. Yang,  Y.  Chen, High-order soliton matrices for Sasa-Satsuma equation via local Riemann-Hilbert problem,
 Nonlinear Anal.-Real  45 (2019) 918-941
\bibitem{llm} D. F. Bian,  B. L. Guo, L. M.  Ling, High-order soliton solution of Landau-Lifshitz equation, Stud. Appl. Math. 134 (2015) 181-214.
\bibitem{sh2003} V. S. Shchesnovich, J. K. Yang, Higher-Order solitons in the N-wave system, Stud. Appl. Math. 110 (2003) 297-332.
\end{thebibliography}
\end{document}